\title{Mapping Power Relations: A Geometric Framework for Game-Theoretic Analysis}
\author{Daniele De Luca}
\date{September 2025}
\newtheorem{theorem}{Theorem}       
\newtheorem{lemma}{Lemma}           
\newtheorem{corollary}{Corollary}
\begin{document}
\maketitle

\begin{abstract}
This paper develops a geometric framework for analyzing power relations in games. 
We introduce the \emph{preference space}, a canonical domain in which each player’s 
relational stance---altruistic, selfish, antagonistic, or neutral---is represented as a 
vector in a normalized projective space. We show how classical concepts such as bargaining power, dependence, and reciprocity can be recovered within this framework and extended to arbitrary non-cooperative environments. Applications to standard bimatrix games, economic models, and three-player games demonstrate that the preference space reveals structural similarities across different strategic settings, even when their strategy sets differ. In particular, we show how to compute indices that synthesize hierarchical and reciprocal structures at the game level. This unified perspective bridges cooperative and non-cooperative approaches to power, providing a relational measure of influence that is independent of the specific strategic form.
\end{abstract}

\newpage

\section{Introduction}

The study of power relations has deep roots in both economics and political science. Power is often defined as an actor’s capacity to alter another's behavior or welfare. Foundational definitions by \citep{Dahl1957}, \citep{Taylor1982}, \citep{Bardhan1991}, and \citep{Bartlett2006} converge on this core idea of inducing change. Furthermore, as \citep{Bowles1998} and Stoppino in \citep{Matteucci1991} stress, this change is typically directed to serve the power-wielder’s interests. From this perspective, positive (negative) power is the capacity of being favored (harmed) by others. 

Within economics, the Marxist tradition was one of the first to address the theme of power \citep{Ricciardi2019}. For Marx, the capitalist mode of production rests on the appropriation of unpaid labor time, producing a systematic asymmetry of power between classes \citep{Marx2024}. Engels emphasized that this domination is not due to individual greed or wickedness but is embedded in the social organization of production itself \citep{Engels2017}. Lenin extended the analysis to the global sphere, arguing that monopolistic control enables core economies to extract surplus from peripheral nations \citep{Lenin2015}. Later Marxist economists such as \citep{Emmanuel1972} and \citep{Wallerstein2011} developed this into a systematic theory of unequal exchange or dependence, showing how international trade reproduces global hierarchies of power. Across these strands, the Marxist view treats power as a structural inequality masked by the apparent equality of exchange in the free market. This conception is close to the previous one: the worker structurally favors the capitalist, and the peripheral country structurally favors the core.

Classical cooperative game theory introduced power indices such as the Shapley–Shubik index \citep{ShapleyShubik1954} and the Banzhaf index \citep{Banzhaf1965}, which measure a player’s ability to alter collective outcomes in voting situations. Both indices distribute the total “value” or “power” created by a coalition among its members by asking a counterfactual question: what would have happened if this agent had not been present? At first sight, the political–economic tradition and the cooperative game-theoretic tradition appear distinct, if not opposite: for the former, power means to be favored; for the latter, power means the capacity to favor others. Yet both converge in conceiving power as strictly tied to utility and gain. 

With Aristotelian terminology, these two aspects can be described as the actual and the potential of power. In our framework, the preference space has a natural origin at the identity matrix $I$, where subjective and objective utilities coincide. This makes it intrinsically an actual/ex-post construct, representing realized or counterfactually realized payoffs. Potential or ex-ante aspects of power (e.g., capacity to change equilibria before they materialize) can only be recovered by averaging over distributions of preferences, as we do later in the bargaining power section. 

Subsequent research has investigated power in bargaining contexts, often linking it to the distribution of disagreement payoffs, commitment opportunities, or agenda-setting roles \citep{Binmore1987, BaronFerejohn1989, Rubinstein1982}. \citep{Schelling1960} emphasized the role of credible threats as a strategic source of power, while contest theory \citep{Tullock1980, Skaperdas1996} highlighted asymmetries in the conversion of resources into winning probabilities. In parallel, models of informational power have developed within Bayesian persuasion and information design \citep{KamenicaGentzkow2011}. What these diverse strands have in common is the counterfactual reasoning and the attempt to capture power as an influence over outcomes, whether by rules, resources, timing, or threats. Yet none of these approaches provides a general, non-cooperative analysis of power that applies across arbitrary strategic environments.

Describing power is often challenging due to its relational nature: unlike standard economics, you need to consider the choices of a player in connection with the payoff of another player. \citep{GrossiTurrini2012} made a notable attempt to develop a measure of dependence in game theory, raising, for the first time (as long as we know), counterfactual question of the type: what would happen if a player plays assuming the utility of another? Their results build a bridge between non-cooperative and coalitional games.

Along the same line of thought, a recent proposal by \citep{GorlachMotz2024} defines a general measure of bargaining power as counterfactual influence. A player’s power is quantified by the extent to which the outcome distribution of a game changes when their preferences are substituted, normalized relative to the “local dictator” benchmark where all players adopt the same utility. Their axiomatic characterization shows that this measure is unique under natural requirements (null player, local dictator, proportionality, invariance, and compound games). Moreover, when applied to voting environments, it reduces to classical indices such as Shapley–Shubik and Banzhaf, thereby unifying older concepts within a general framework. The key insight is that power can be seen as a structural property of the mapping from preferences to equilibria, not reducible to material payoffs alone or to the credibility of threats.

In the Görlach–Motz model, the set of "relevant" utilities $U$ is chosen on a case-by-case basis: the authors describe it as ‘intentionally vague’ and acknowledge that specifying $U$ is, in practice, a matter of researcher discretion. To counter this arbitrariness, they impose the axiom of invariance to irrelevant extensions (A1), meaning that adding utility functions to $U$ does not change the measure of power. Yet the selection of $U$ remains ex ante discretionary.

While Görlach–Motz avoid the arbitrariness of strategic form by outsourcing to outcome probabilities, they reintroduce arbitrariness in the choice of the utility domain $U$. Our preference space removes this discretion: it provides a canonical, continuous, and game-independent domain where every player’s relational posture is defined up to scale. Unlike discrete specifications of $U$, this domain captures the full geometry of possible relations, including those necessary to sustain mixed-preference equilibria.

In our preference space, each point represents a specific configuration of players' social preferences (altruistic, selfish, antagonistic and so on). Each player may adopt altruistic, selfish, or antagonistic stances toward self and others. This idea has been developed within the tradition of interdependent preference models dating back to \citep{Pareto1913}, \citep{Gaertner1974}, \citep{Pollak1976}, \citep{Bergstrom1987}, and \citep{Bergstrom1999}, and extended to inequity aversion and fairness considerations \citep{FehrSchmidt1999, BoltonOckenfels2000, CharnessRabin2002, AndreoniMiller2002, FismanKarivMarkovits2007}. Empirical studies further stress the heterogeneity and context-dependence of social motives \citep{BlancoEngelmannNormann2011, DimickRuedaStegmueller2018}. Although systems of interdependent utility like the present one have a long history, they were never used to analyze the power relations of games, because they were never used counterfactually.

Here, the analysis of a game with the preference space is articulated in two steps:

\begin{enumerate}
    \item \textit{Projection.} For any given game, we project its payoffs, strategies, or outcomes onto the preference space. 
    This yields a landscape that shows how these elements vary with changes in the relational structure between players.
    
    \item \textit{Reduction and Analysis.} The resulting landscape can then be studied by focusing on key features:
    \begin{itemize}
        \item \textit{Local Analysis:} examining specific, strategically significant profiles 
        (e.g., cardinal and coalitional points, potential power) and their associated outcomes.
        \item \textit{Global Analysis:} identifying a representative point in the preference space, 
        the center of mass, which summarizes the overall network structure.
        \item \textit{Structural Indices:} defining indices (Reciprocity, Hierarchy, Diagonalization) 
        that provide a quantitative description of the power relations at any given point in this space.
    \end{itemize}
\end{enumerate}

\section{Materials and Methods}

\subsection{Overview}
In what follows, we present several preference-space-based analyses of games. All computations and figures are produced on Python. In the first section on strategic bimatrix games, we employ the \textit{support enumeration} 
algorithm from the \textit{Nashpy} Python library to compute equilibria, and we average 
only Pareto-efficient equilibria to compute players’ payoffs at sampled points in 
the preference space.  We use \textit{Matplotlib} for plots.
In the analysis of the Cournot model, we use a heuristic approximation method that 
starts from an arbitrary point (0) and proceeds by iteratively finding the reciprocal 
best responses of each player for a fixed number of steps, or terminates if both players 
no longer wish to change their strategies. 
For the case of three players with two strategies each, we adopt a Monte Carlo sampling 
scheme of the preference space. For each run, equilibria are computed at the sampled 
points and the corresponding centers of mass are evaluated; repeating the procedure with 
different random seeds allows us to estimate the statistical error.

\subsection{The Outcome Mapping Function $\mu$}
\label{subsec:mu_function}

The function $\mu$ is a fundamental component of our framework, serving as the \textit{outcome mapping function}. It translates the vector of subjective payoff functions, $V\mathbf{u}$, into a probability distribution over the set of outcomes, $O$. Formally, $\mu(V\mathbf{u}) \in \Delta(O)$, where $\Delta(O)$ is the simplex of probability distributions over $O$.

The interpretation of $\mu$ is that it represents the \textit{strategic response} of the players to their subjective preferences. In other words, $\mu$ encapsulates a solution concept that predicts how the game will be played given the profile of interdependent preferences $V$. To ensure generality and avoid commitment to a single, potentially restrictive solution concept, we define $\mu$ axiomatically by its essential properties rather than by a specific algorithmic procedure. This allows the framework to be applied across a wide range of strategic environments.

The following axioms are imposed on $\mu$:

\begin{enumerate}
    \item \textit{Well-Definedness:} $\mu(V\mathbf{u})$ is defined for every $V \in \mathbb{V}$, and the expected payoff $\mathbb{E}_i(V)$ is defined for each player $i$.
    
    \item \textit{Nash Consistency:} The support of $\mu(V\mathbf{u})$ must be a subset of the Nash equilibria of the game defined by the subjective payoff functions $V\mathbf{u}$. That is, $\text{Supp}(\mu(V\mathbf{u})) \subseteq \text{NE}(V\mathbf{u})$. This ensures that the predicted outcomes are strategically stable given the players' preferences.
    
    \item \textit{Pareto Efficiency:} We require that $\mu(V\mathbf{u})$ is supported only on the Pareto-efficient Nash equilibria relative to the set $\text{NE}(V\mathbf{u})$. This is a natural refinement because we are dealing with the capacity of players to play together, for example in a coalition or in a dictator position. This axiom assure coordination among players.
    
    \item \textit{Invariance to Positive Affine Transformations:} The distribution $\mu(V\mathbf{u})$ is invariant under positive affine transformations of the subjective utility functions. This aligns with the standard assumption in game theory that utility is ordinal, not cardinal.
\end{enumerate}

\section{Results}

\subsection{Projection onto the Preference Space}

\subsubsection{Requirements}

To analyze power and influence in games, we require a formal language capable of representing the full spectrum of possible attitudes—altruism, selfishness, antagonism, and indifference—that players may hold towards one another. These attitudes are fundamentally relational and directional: how much does player $i$ favor or harm player $j$? We thus postulate that any representation of relational power must satisfy the following requirements. 

\begin{enumerate}
    \item The framework must represent the attitude of any player $i$ towards any player $j$ by a real number $v_{ij}\in\mathbb{R}$. The sign of $v_{ij}$ indicates the direction of the attitude (positive for favor, negative for harm), and its magnitude $|v_{ij}|$ indicates the intensity. This is the most basic requirement: it encodes a weighted, directed graph of affinities among players, which is the bedrock of sociological network analysis and relational power. The attitudes of $n$ players are represented by a matrix $V\in\mathbb{R}^{n\times n}$. By convention, each vector $\mathbf{v}_i$, which represents the importance assigned by player \( i \) to the payoffs of all players, including themselves, is taken to be a row vector. 
    \item Only the \emph{relative} intensity of a player’s attitudes towards others is meaningful. A statement such as “I favor you twice as much as I favor her” is meaningful; stating “my favor has an absolute intensity of 2” is not. The framework must be invariant under positive scalar transformations of any player’s vector of attitudes. Formally, for any player $i$ and any $\lambda>0$, the attitude vectors $\mathbf{v}_i$ and $\lambda \mathbf{v}_i$ represent the same relational posture.
    \begin{equation}
    \mathbf{v}_i \sim \lambda \mathbf{v}_i, \quad \forall \lambda > 0.
    \end{equation}
    That is, each vector $\mathbf{v}_i$ is defined up to a positive scalar multiple. If $n$ is the number of players, the preference of each player $i$ is an element of the projective space $\mathbb{R}^n/\!\sim$, which is isomorphic to the $(n-1)$-sphere $S^{n-1}$ upon normalization ($\|\mathbf{v}_i\|=1$). Thus the full preference space is $(S^{n-1})^n$. 
\end{enumerate}

We refer to this space as the \textit{preference space} (or $\mathbb{V}$) and to any point within this space as a \textit{preference position} $V$. Each $V$ can be viewed also as a weighted directed graph, so we will use alternatively terminology from graph theory to address players and power relations, i.e., "nodes" and "edges".

Directionality of preferences is both sufficient and necessary for a canonical representation of power. It is sufficient, because any relational attitude between players—favoritism, antagonism, indifference—can be expressed as a directed weight from one to the other, and deviations from efficient equilibria can be understood as quantifiable shifts in such directional preferences. It is also necessary, because restricting attention to discrete stances leaves out important cases: there exist outcomes that are equilibria only under mixed preference configurations (see for example the discussion on Figure \ref{fig:BS_outcome}).

\subsubsection{General Structure of a Game}
A game \( G \) is defined by the tuple:
\begin{equation}
\Gamma = (N, O, \mathbf{\mathbf{u}}, \mu),
\end{equation}
where:
\begin{itemize}
    \item \( N \) is the set of players, \( |N| \geq 2\),
    \item \( O \) is the set of outcomes, \(|O| \geq 2\),
    \item $\mathbf{\mathbf{u}}$ is the vector of \textit{objective payoff functions}, one per player, \( \mathbf{u}_i(o): O \to \mathbb{R}^+ \), that define the payoff received by player \( i \in N \) for the outcome \(o \in O\), and that we always assume as positive valued,
    \item $\mu$ is a function that maps any vector of payoff functions to a probability distribution into the outcome set. It represents a solution concept satisfying specific axioms, as detailed in Section~\ref{subsec:mu_function}.
\end{itemize}

The \textit{subjective payoff function} $(V \mathbf{u})_i : O \rightarrow \mathbb{R}$ is derived from some preference position $V \in \mathbb{V}$ and the vector of objective payoff functions $\mathbf{u}$ by matrix multiplication, where $(\cdot)_i$ denotes the $i$-th row. In explicit notation:
\begin{equation}
(V \mathbf{u})_i=\sum_j v_{ij} \mathbf{u}_j(o)
\end{equation}
This function captures how player $i$ evaluates outcomes based on their own preferences and attitudes toward others. These attitudes are not fixed: they can vary within $\mathbb{V}$. If $V=I$ \textit{objective} and \textit{subjective} payoff coincide.

The key assumption is that the \textit{distribution $\mu$ is induced by the subjective payoff function $V\mathbf{u}$}. 
In other words, outcomes that yield higher subjective value for the players are assigned higher probability, 
reflecting the idea that strategic behavior is driven by preference-weighted expectations.

By plotting $\mu(V\mathbf{u})$ into the preference space $\mathbb{V}$ we can classify the outcomes 
according to their position in this space. 
Since each outcome corresponds to a specific strategy profile, for any non-null player we can therefore also map strategies. 

Moreover, we can project the objective payoff function $\mathbf{u}_i$ itself into the preference space. 
The expected (objective) payoff of player $i$ under $\mu$ is then given by
\begin{equation}
\mathbb{E}_i(V) \;=\; \int_O \mathbf{u}_i(o)\, d\mu_o(V\mathbf{u}),
\end{equation}
which captures how the relational structure encoded in $V$ shapes the actual material payoffs of the game.

We follow \citep{GorlachMotz2024} assuming the distribution of outcome probabilities as given, though in Theorem~\ref{th:expectedmax} we attribute some general properties to the function $\mu$. Outsourcing the strategic component, we transforms every non-cooperative game into an extended coalition game. In our framework, coalitions are encoded as mutual networks of preferences, and the characteristic function arises endogenously from the geometry of the preference space. Classical coalition games are thus included in $\mathbb{V}$ as a reduction, corresponding to profiles where players adopt purely mutual preferences (coalitional vectors). More generally, however, $\mathbb{V}$ provides a continuum of relational stances—altruistic, antagonistic, neutral—so that every non-cooperative game can be studied as a coalition game with extended domains of cooperation and conflict.

For $C\subseteq N$, let $\mathbf{u}^C:=\sum_{j\in C} \mathbf{e}_j\in\mathbb{R}^n$, where $\{\mathbf{e}_j\}$ are canonical basis vectors.
The \emph{preference position} for $C$ is $V^{C}$ defined by
$\mathbf{v}_i = \mathbf{u}^C/\|\mathbf{u}^C\|$ for all $i\in C$, and $\mathbf{v}_k=e_k$ for $k\notin C$.

For any coalition $C\subseteq N$, we define the characteristic function $v(C)$
\begin{equation}
v(C) \;:=\; \sum_{i\in C} \mathbb{E}_i\!\big(V^{C}\big)
\end{equation}

For any non-cooperative environment $\Gamma=(N,O,\mathbf{u}, \mu)$, the pair
\begin{equation}\label{coalitional}
\big(N,\, v(C)\big)
\end{equation}
is a cooperative coalition game canonically induced by the preference space. In particular, classical cooperative solution concepts (core, Shapley value) apply.

\subsection{Analysis of the Preference Space}

\subsubsection{Cardinal and Extreme Points}

The set of \textit{cardinal points} $K \subset \mathbb{V}$ is the set of points where each player's preference vector is a canonical basis vector (or its negative):
\begin{equation}
K=\{ V \in \mathbb{V} : \forall j \in  N, \exists k \in N, (\mathbf{v}_j=\mathbf{e}_k) \, \lor \, (\mathbf{v}_j=-\mathbf{e}_k) \}
\end{equation}
This set contains all the points in $\mathbb{V}$ where the preference positions are discrete, i.e., each player can favor or harm only one of the other players or themselves. Note that $I \in K \subset \mathbb{V}$ is the identity matrix, that is the only preference position where objective and subjective payoff coincide. $K_i$ is the set of cardinal preferences of player $i$.

Let $\mathbf{1}$ denote a $n$-dimensional column vector of all ones, $n=|N|$. Then $\mathbf{1} \mathbf{h}^\top \in \mathbb{V}$ is a matrix with all the rows equal to some vector $\mathbf{h}$.  
$\mathbf{1} \mathbf{e}_j^\top$ ($-\mathbf{1} \mathbf{e}_j^\top$)is the expected maximum (minimum) cardinal point of $\mathbb{V}$ for $j$, i.e., where every player’s subjective preference is aligned to maximize (minimize) player $i$’s objective payoff. 
The following proposition states that $\mathbf{1} \mathbf{e}_j^\top$ is always a global maximum if the $\mu$ function considers only Pareto-efficient equilibria, as required in Section~\ref{subsec:mu_function}.

\begin{theorem}\label{th:expectedmax}[Expected Maximum Point]
Let $\mathrm{PE}(V)\subseteq \mathrm{NE}(V)$ be the subset of equilibria that are Pareto–optimal with respect to other equilibria in $\mathrm{NE}(V)$. Define the expected payoff of player $j$ at $V$ as
\begin{equation}
\mathbb{E}_j(V) \;=\; \int_O \mathbf{u}_j(o)\, d\mu_o(V\mathbf{u}),
\end{equation}
where $\mu$ is a probability measure supported on $\mathrm{PE}(V)$. Then $\mathbb{E}_j(V)$ attains its global maximum at $V^\star=\mathbf{1}\mathbf{e}_j^\top$.
\end{theorem}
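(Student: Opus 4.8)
The plan is to prove the claim in two movements: first establish a ceiling on $\mathbb{E}_j$ that holds uniformly over the whole preference space, and then show that the particular point $V^\star=\mathbf{1}\mathbf{e}_j^\top$ reaches it. Write $M_j:=\max_{o\in O}\mathbf{u}_j(o)$, which exists whenever $O$ is finite (as in the bimatrix and three-player applications) or compact. The engine of the whole argument is the observation that $V^\star$ collapses every player's subjective payoff onto player $j$'s objective payoff: since each row of $V^\star$ equals $\mathbf{e}_j^\top$, the subjective payoff of any player $i$ is $(V^\star\mathbf{u})_i(o)=\sum_k (\mathbf{e}_j)_k\,\mathbf{u}_k(o)=\mathbf{u}_j(o)$. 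Hence at $V^\star$ the subjective game is a common-interest (identical-payoff) game in which all players share the single objective $\mathbf{u}_j$.

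For the uniform upper bound, I would note that by Well-Definedness and the axioms on $\mu$, the quantity $\mathbb{E}_j(V)=\int_O \mathbf{u}_j\,d\mu_o(V\mathbf{u})$ is the expectation of $\mathbf{u}_j$ against a probability measure on $O$. An expectation of a function never exceeds its supremum, so $\mathbb{E}_j(V)\le M_j$ for every $V\in\mathbb{V}$; here positivity of $\mathbf{u}$ plays no role. It therefore suffices to prove $\mathbb{E}_j(V^\star)=M_j$.

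The crux is attainment at $V^\star$. Let $o^\star\in\arg\max_{o\in O}\mathbf{u}_j(o)$, realised by some strategy profile $s^\star$ (outcomes correspond to strategy profiles). In the common-interest game at $V^\star$ every player evaluates outcomes through the single function $\mathbf{u}_j$, so any unilateral deviation by a player leads to an outcome $o'$ with $\mathbf{u}_j(o')\le M_j=\mathbf{u}_j(o^\star)$; no deviation is profitable, and $s^\star$ is a pure Nash equilibrium. Thus $o^\star\in\mathrm{NE}(V^\star)$ and $M_j$ is attained inside the equilibrium set. Because all players share the payoff $\mathbf{u}_j$, any equilibrium $o'$ with $\mathbf{u}_j(o')<M_j$ is strictly Pareto-dominated by $o^\star$ and is excluded from $\mathrm{PE}(V^\star)$; consequently every element of $\mathrm{PE}(V^\star)$ satisfies $\mathbf{u}_j(o)=M_j$. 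Since the Pareto-Efficiency axiom forces $\mu(V^\star\mathbf{u})$ to be supported on $\mathrm{PE}(V^\star)$, we obtain $\mathbb{E}_j(V^\star)=\int_O \mathbf{u}_j\,d\mu=M_j$. Combining this with the uniform bound yields $\mathbb{E}_j(V)\le M_j=\mathbb{E}_j(V^\star)$ for all $V$, i.e.\ the global maximum is at $V^\star$.

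The main obstacle is the middle claim of the preceding paragraph: that the maximiser $o^\star$ is genuinely realisable as a Pareto-efficient Nash equilibrium, rather than being merely an abstract upper bound. This is exactly where the common-interest structure induced by $V^\star$ is indispensable—it guarantees both that the joint maximiser is unilaterally deviation-proof (hence a Nash equilibrium) and that the Pareto frontier among equilibria reduces to the single common value $M_j$, so the refinement in the Pareto-Efficiency axiom cannot place probability mass on sub-maximal outcomes. If $\mu$ were permitted to weight sub-maximal or non-Pareto equilibria, attainment could fail; the argument thus leans essentially on Nash Consistency and Pareto Efficiency together. A secondary technical point to pin down is the correspondence between outcomes and strategy profiles and the existence of the maximiser, which requires only compactness of $O$ and is automatic in the finite settings the paper treats.
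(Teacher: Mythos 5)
Your proposal is correct and follows essentially the same route as the paper's proof: both rest on the observation that $V^\star=\mathbf{1}\mathbf{e}_j^\top$ turns the game into an identical-interest game with common payoff $\mathbf{u}_j$, that the maximizers of $\mathbf{u}_j$ are then Nash equilibria, that the Pareto-efficient equilibria are exactly these maximizers, and that $\mu$ being supported on them forces $\mathbb{E}_j(V^\star)=M_j$. The only difference is presentational---the paper frames it as a contradiction while you argue directly via the uniform bound $\mathbb{E}_j(V)\le M_j$, which in fact makes explicit a step the paper leaves implicit.
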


\begin{proof}
Suppose, to the contrary, that $V^\star=\mathbf{1}\mathbf{e}_j^\top$ is not a global maximizer. Then there exists $V'$ such that $\mathbb{E}_j(V')>\mathbb{E}_j(V^\star)$. Let $M=\max_o u_j(o)$.

At $V^\star$, all players share the same subjective utility, namely $u_j(o)$. The transformed game is thus an identical–interest game with common payoff $u_j$. Any outcome $o^\dagger$ satisfying $u_j(o^\dagger)=M$ must be a Nash equilibrium: if some player had a profitable deviation, the deviation would yield a payoff strictly greater than $M$, contradicting maximality. Therefore, all global maximizers of $u_j$ are Nash equilibria of the game at $V^\star$. 

It remains to prove that all equilibria at $V^\star$ are global maximizers. Since all players’ utilities coincide at $V^\star$, Pareto dominance is simply comparison of $u_j(o)$. Thus, among the Nash equilibria, the Pareto–optimal ones are exactly those outcomes $o^\dagger$ with $u_j(o^\dagger)=M$. Every such equilibrium yields the same payoff $M$ to player $j$. As $\mu$ is supported on this set, it follows that $\mathbb{E}_j(V^\star)=M$.
\end{proof}

\begin{corollary}\label{co:expectedmin}[Expected minimum point]
Within the same condition as Theorem~\ref{th:expectedmax}, $\mathbb{E}_j(V)$ attains its global minimum at $V^\star=-\mathbf{1}\mathbf{e}_j^\top$.
\end{corollary}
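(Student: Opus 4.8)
The plan is to mirror the structure of Theorem~\ref{th:expectedmax}, exploiting the elementary fact that an expectation of $u_j$ can never fall below the pointwise minimum of $u_j$. Write $m := \min_{o} u_j(o)$, which exists under the same boundedness assumptions used for $M$ in the theorem (and is positive-valued by the standing assumption on $\mathbf{u}$). The first, essentially trivial observation is that for every $V \in \mathbb{V}$, since $\mu(V\mathbf{u})$ is a probability measure supported on $\mathrm{PE}(V) \subseteq O$ and $u_j(o) \ge m$ for all outcomes $o$, we immediately obtain
\begin{equation}
\mathbb{E}_j(V) = \int_O u_j(o)\, d\mu_o(V\mathbf{u}) \;\ge\; m .
\end{equation}
This is a uniform lower bound on $\mathbb{E}_j$ over the entire preference space, so it suffices to exhibit a single point at which the value $m$ is attained.

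The second step shows the bound is achieved exactly at $V^\star = -\mathbf{1}\mathbf{e}_j^\top$. At this point every player's preference row is $\mathbf{v}_i = -\mathbf{e}_j$, so each player's subjective payoff is $(V^\star \mathbf{u})_i = -u_j(o)$, identical across players. The transformed game is therefore an identical-interest game with common payoff $-u_j$, and I would apply the reasoning of the theorem verbatim to this common payoff: any outcome $o^\dagger$ maximizing $-u_j$ (equivalently, minimizing $u_j$) must be a Nash equilibrium, since a profitable deviation would have to yield a common payoff strictly above $\max_o(-u_j(o)) = -m$, i.e. a $u_j$ value strictly below $m$, contradicting minimality. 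Because all subjective utilities coincide at $V^\star$, Pareto dominance reduces to a comparison of $-u_j$, so the Pareto-optimal equilibria in $\mathrm{PE}(V^\star)$ are precisely the minimizers of $u_j$. Each such outcome yields $u_j = m$, and since $\mu$ is supported on $\mathrm{PE}(V^\star)$, we conclude $\mathbb{E}_j(V^\star) = m$.

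Combining the two steps gives $\mathbb{E}_j(V) \ge m = \mathbb{E}_j(V^\star)$ for all $V$, so $V^\star = -\mathbf{1}\mathbf{e}_j^\top$ is a global minimizer of $\mathbb{E}_j$. The step I expect to require the most care is the sign-reversal in the identical-interest argument: in the theorem the common payoff $+u_j$ is maximized at the maximizers of $u_j$, whereas here the common payoff is $-u_j$, whose maximizers are the minimizers of $u_j$. I would track this flip consistently and emphasize that the characterization of Pareto-optimal equilibria as common-payoff maximizers relies only on the identical-interest structure, not on positivity of the common payoff (here $-u_j \le 0$), so the argument transfers unchanged.

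An alternative route is to invoke Theorem~\ref{th:expectedmax} directly for the auxiliary objective $\tilde u := -u_j$, whose aligned maximum point is precisely $-\mathbf{1}\mathbf{e}_j^\top$ and for which maximizing $\mathbb{E}[\tilde u] = -\mathbb{E}[u_j]$ is the same as minimizing $\mathbb{E}[u_j]$. I prefer the direct argument above, since it avoids inserting a non-positive payoff into the game tuple $\Gamma$ and keeps the derivation self-contained.
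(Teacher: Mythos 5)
Your proof is correct and follows essentially the same route as the paper, which disposes of the corollary by noting the argument is symmetric to Theorem~\ref{th:expectedmax} applied to the identical-interest game with common payoff $-u_j$ --- exactly your second step. Your explicit uniform lower bound $\mathbb{E}_j(V)\ge m$ is a minor (and welcome) tightening of the paper's contradiction-style framing, not a different method.
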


\begin{proof}
The argument is symmetric to the proof of Theorem~\ref{th:expectedmax}, 
applied to the identical–interest game with common payoff $-u_j$.
\end{proof}

\subsubsection{Bargaining Power}

We can measure the overall capacity of player $i$ to favor or harm player $j$, i.e., the relational potential power of $i$ on $j$, with:
\begin{equation}
\tilde\rho_{ij}
\;=\;
\frac{\displaystyle \int_{V_{-i}}
\Big(\mathbb{E}_j(V_{-i \gets \mathbf{e}_j})-\mathbb{E}_j(V_{-i \gets -\mathbf{e}_j})\Big)}
{\;\mathbb{E}_j(\mathbf{1}\mathbf{e}_j^\top)-\mathbb{E}_j(-\mathbf{1}\mathbf{e}_j^\top)\;}.
\end{equation}
Here the integral is taken with respect to the uniform distribution over all configurations of the other players’ preferences $(V_{-i})$. The terms $V_{-i \gets \mathbf{e}_j}$ and $V_{-i \gets -\mathbf{e}_j}$ means that the $i$-th row is respectively $\mathbf{e}_j$ and $-\mathbf{e}_j$, while other rows can vary over the entire space.  We consider also a local version of the previous formula and we show that, in classical voting games, it coincides with \textit{pivotality}:
\begin{equation}\label{rho}
\rho_{ij}= \frac 
{\mathbb{E}_j(I_{\mathbf{e}_i \gets \mathbf{e}_j}) - \mathbb{E}_j(I_{\mathbf{e}_i \gets \mathbf{-e}_j})}
{\mathbb{E}_j(\mathbf{1} \mathbf{e}_j^\top) - \mathbb{E}_j(-\mathbf{1} \mathbf{e}_j^\top)},
\end{equation}
where $I_{\mathbf{e}_i \gets \mathbf{e}_k}$ is the identity matrix (representing the baseline of selfish preferences) with the $i$-th row replaced by the vector $e_k$, meaning player $i$'s preferences are altered to purely and exclusively favor player $j$.

The numerator of this measure quantifies the total local causal effect player $i$ could have on player $j$'s payoff. It captures the full range of change in $j$'s expected objective payoff as $i$'s stance flips from one extreme of pure favoritism ($e_j$) to the other extreme of pure harm ($-e_j$), holding all other players' preferences at their baseline.

The denominator establishes the \textit{total possible range} for player $j$'s payoff within the game. It represents the difference between $j$'s payoff in the expected maximum and minimum cardinal point for $j$. This serves as a natural benchmark, normalizing $i$'s influence against the maximum possible influence that could ever be exerted on $j$'s welfare by the entire group.

The formula \ref{rho} is valid if and only if
\begin{equation}
\mathbb{E}_j(\mathbf{1} \mathbf{e}_j^\top) - 
\mathbb{E}_j(-\mathbf{1} \mathbf{e}_j^\top) \neq 0,
\end{equation}
and thanks to Theorem~\ref{th:expectedmax} and Corollary~\ref{co:expectedmin} we know the conditions where this requirement always holds, unless the payoff of player $j$ is degenerate, i.e. it is constant.

It is worth noting that \ref{rho} can be diagonalized as:
\begin{equation}\label{diagrho}
\rho_{ii}= \frac 
{\mathbb{E}_i(I) - \mathbb{E}_i(I_{\mathbf{e}_i \gets \mathbf{-e}_i})}
{\mathbb{E}_i(\mathbf{1} \mathbf{e}_i^\top) - \mathbb{E}_i(-\mathbf{1} \mathbf{e}_i^\top)}.
\end{equation}

With $\rho_{ii}$ we are measuring the bargaining power of $i$ on themselves, relating the magnitude of the difference in $i$'s payoff if \emph{only} $i$ passes from selfishness to self-harming, to the same magnitude if all the players make the same passage.

Thus, $\rho_{ij}$ can be interpreted as the proportion of the total possible variation in player $j$'s payoff that player $i$ can single-handedly control through changes in their relational stance.

This interpretation is clarified by considering its extreme values:
\begin{itemize}
    \item $\rho_{ij} = 1$: This occurs if player $i$ has complete and dictatorial control over player $j$'s welfare. In this case, $i$'s preference shift from pure harm to pure favoritism moves $j$'s payoff from its absolute minimum to its absolute maximum. The preferences of all other players are irrelevant for $j$'s outcome; $i$'s will alone determines it. $i$ is a \emph{local dictator} with respect to $j$.

    \item $\rho_{ij} = 0$: This occurs if player $i$ has no influence whatsoever on player $j$'s payoff. Changing $i$'s stance from pure harm to pure favoritism has no effect on the expected value of $j$'s objective outcome. $j$'s payoff is entirely determined by the other players and the structure of the game, making $i$ a \emph{null player} with respect to $j$.

    \item $\rho_{ij} < 0$: This is a paradoxical situation where the aim of the action of a player is in contradiction with the effects it produces. We will treat this possibility in the Application section and call it the \textit{Self-Harm Paradox}.
\end{itemize}

The diagonalized form of the relational potential power measure, \ref{diagrho}, provides a direct bridge to classical cooperative power indices when applied to voting games. In this context, where payoffs are binary and the denominator simplifies to unity, $\rho_{ii}$ reduces to an indicator function that takes a value of 1 if player $i$ is pivotal—that is, if a change in their vote alters the outcome—and 0 otherwise.

\begin{lemma}\label{lemma:pivotality}[Pivotality in Voting Games]
Consider a binary voting game with outcome set $O=\{0,1\}$, quota $q>0$, and weights $w\in\mathbb{R}^N_{+}$.  
The \emph{objective payoff} of player $i$ is
\begin{equation}
\mathbf{u}_i(o) \;=\;
\begin{cases}
1 & \text{if the collective outcome $o=1$ coincides with $i$'s vote,}\\[6pt]
0 & \text{otherwise}.
\end{cases}
\end{equation}

The diagonal component of the relational potential power measure is equation \ref{diagrho}. In a voting game, the denominator equals $1$, while the numerator equals $1$ if and only if flipping player $i$’s vote changes the collective outcome. Therefore,
\begin{equation}
\rho_{ii} =
\begin{cases}
1 & \text{if $i$ is pivotal},\\[6pt]
0 & \text{otherwise}.
\end{cases}
\end{equation}
\end{lemma}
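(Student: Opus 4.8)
The plan is to evaluate the numerator and denominator of~\ref{diagrho} separately and then recombine them. The denominator is immediate from the two extremal results already in hand: because the objective payoff $u_i$ takes only the values $0$ and $1$, Theorem~\ref{th:expectedmax} gives $\mathbb{E}_i(\mathbf{1}\mathbf{e}_i^\top)=\max_o u_i(o)=1$ and Corollary~\ref{co:expectedmin} gives $\mathbb{E}_i(-\mathbf{1}\mathbf{e}_i^\top)=\min_o u_i(o)=0$, provided the game is non-degenerate (some outcome matches $i$'s preferred vote and some does not). Hence the denominator equals $1$, and $\rho_{ii}$ collapses to its numerator $\mathbb{E}_i(I)-\mathbb{E}_i(I_{\mathbf{e}_i\gets -\mathbf{e}_i})$.

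For the numerator, I would first describe the two preference positions concretely. Passing from $I$ to $I_{\mathbf{e}_i\gets -\mathbf{e}_i}$ replaces player $i$'s subjective utility $u_i$ by $-u_i$ while leaving every other player selfish. Maximizing $u_i$ means casting one's vote so the collective outcome matches $i$'s preferred alternative, whereas maximizing $-u_i$ means voting against it; every other player, being selfish, votes sincerely at both positions. Thus the two vote profiles selected by $\mu$ differ only in player $i$'s own vote, with all other (sincere) votes held fixed.

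The conclusion then follows from a case split on pivotality. If flipping $i$'s vote flips the collective outcome, then at $I$ the outcome coincides with $i$'s preference, so $\mathbb{E}_i(I)=1$, while at $I_{\mathbf{e}_i\gets -\mathbf{e}_i}$ it is the opposite, so $\mathbb{E}_i(I_{\mathbf{e}_i\gets -\mathbf{e}_i})=0$, giving a numerator of $1$. If $i$ is not pivotal, the outcome is determined entirely by the other players' sincere votes and is identical at both positions; since $u_i$ depends on the outcome alone, the two expected payoffs agree and the numerator is $0$. Recombining with the unit denominator yields exactly the indicator form claimed.

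The step I expect to be the main obstacle is justifying that $\mu$ actually returns the sincere-voting outcome at $I$. Sincere voting is only weakly dominant, so the subjective game at $I$ generally admits several Nash equilibria, and in mixed-preference profiles the two collective outcomes are Pareto-incomparable, so the Pareto-efficiency axiom need not isolate a unique one. I would resolve this by emphasizing that the numerator holds the other players' selfish behavior fixed across both positions, so the sole degree of freedom is $i$'s own vote; the quantity being measured is precisely whether that vote is decisive given the others, which is the definition of pivotality and is unaffected by any residual multiplicity in the equilibrium set.
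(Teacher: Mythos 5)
Your proposal is correct and follows essentially the same route as the paper: the paper's own argument (embedded directly in the lemma statement, with no separate proof environment) consists of exactly your two steps --- the denominator equals $1$ by Theorem~\ref{th:expectedmax} and Corollary~\ref{co:expectedmin}, and the numerator equals the pivotality indicator via the case split on whether player $i$'s vote is decisive given the others' sincere votes. The equilibrium-multiplicity concern you flag (that $\mu$'s Pareto axiom need not isolate the sincere-voting outcome at $I$) is genuine, but the paper does not address it at all, so your treatment is, if anything, more careful than the original.
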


This mirrors the core logic of marginal contribution in voting power theory. To transition from this ex-post, preference-specific measure of influence to an ex-ante measure of a player's inherent power within the voting rule itself, one must average $\rho_{ii}$ over a distribution of all possible preference profiles. As established by \citep{GorlachMotz2024}, employing a uniform distribution (impartial culture) over these profiles yields an expectation $\mathbb{E}[\rho_{ii}]$ that is proportional to the Penrose-Banzhaf index. Similarly, averaging over a distribution where all coalition sizes are equally likely recovers a value proportional to the Shapley-Shubik index.

\begin{theorem}[Equivalence with Classical Voting Power Indices]\label{th:voting}
In voting games, by Lemma~\ref{lemma:pivotality} and \citep{GorlachMotz2024}'s proposition~2, averaging $\rho_{ii}$
over preference distributions recovers the standard cooperative power indices:
\begin{itemize}
\item under the impartial culture, $\mathbb{E}[\rho_{ii}]$ is proportional to the Penrose--Banzhaf index;
\item under a distribution where all coalition sizes are equally likely, $\mathbb{E}[\rho_{ii}]$ is proportional to the Shapley--Shubik index.
\end{itemize}
\end{theorem}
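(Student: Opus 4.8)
The plan is to establish the theorem in two stages, leveraging Lemma~\ref{lemma:pivotality} as the crucial bridge. The lemma already reduces the diagonal power measure $\rho_{ii}$ to a clean pivotality indicator: $\rho_{ii}=1$ precisely when player $i$ is pivotal for a given configuration of the other players' votes, and $0$ otherwise. The theorem then asserts that \emph{averaging} this indicator over suitable preference distributions reproduces the two canonical indices. Since the statement explicitly invokes \citep{GorlachMotz2024}'s Proposition~2, the proof is essentially a translation argument: I would show that averaging $\rho_{ii}$ in the preference space is the same operation as averaging the pivotality indicator in their framework, so that their proposition applies verbatim up to the proportionality constant.

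First I would fix player $i$ and make explicit what ``averaging over a preference distribution'' means in our setting. Each preference profile of the other players, drawn from the chosen distribution, induces (via the $\mu$ map restricted to Pareto-efficient equilibria of the voting game) a collective outcome, hence a coalition $S \subseteq N \setminus \{i\}$ of players whose votes align so as to make $i$ pivotal or not. By Lemma~\ref{lemma:pivotality}, $\mathbb{E}[\rho_{ii}]$ is then simply the probability, under the induced distribution over coalitions, that $i$ is pivotal: $\mathbb{E}[\rho_{ii}] = \Pr[\,i \text{ is pivotal}\,]$. This is exactly the quantity that classical indices compute, each under its own combinatorial weighting of coalitions.

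Next I would treat the two cases separately. For the impartial culture, the uniform distribution over preference profiles must be shown to induce the uniform distribution over subsets $S \subseteq N \setminus \{i\}$ (equivalently, each of the $n-1$ other players joins the favoring coalition independently with probability $1/2$); then $\mathbb{E}[\rho_{ii}]$ equals the fraction of the $2^{n-1}$ coalitions for which $i$ is a swing voter, which is the (raw) Penrose--Banzhaf count divided by $2^{n-1}$, establishing proportionality to the Banzhaf index. For the Shapley--Shubik case, I would show that sampling so that all coalition \emph{sizes} are equally likely reproduces the Shapley weighting $\tfrac{|S|!\,(n-1-|S|)!}{n!}$ over orderings, so that $\mathbb{E}[\rho_{ii}]$ counts the pivotal orderings with Shapley weights and is therefore proportional to the Shapley--Shubik index. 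In both cases the proportionality constant absorbs the normalization of $\mu$ and the choice of unit payoff.

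The main obstacle I anticipate is \textbf{not} the combinatorics — those are standard once the induced distribution over coalitions is identified — but rather verifying that the $\mu$-induced map from a preference profile to a decisive coalition coincides with the coalition-formation map implicit in \citep{GorlachMotz2024}. Concretely, I must confirm that when the other players' cardinal preferences favor outcome $o=1$ on a subset $S$ and oppose it on its complement, the unique Pareto-efficient Nash equilibrium of the subjective game $V\mathbf{u}$ yields the majority outcome determined by whether $w(S)$ (plus $i$'s weight, in the two sub-cases of the numerator of \ref{diagrho}) meets the quota $q$. This requires that the Pareto-efficiency refinement in Section~\ref{subsec:mu_function} selects the ``correct'' equilibrium in voting games — a point that must be checked but follows from the identical-interest structure already exploited in the proof of Theorem~\ref{th:expectedmax}. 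Once this correspondence is pinned down, the equivalence reduces to citing the established result, and the remaining steps are purely bookkeeping of weights and normalizations.
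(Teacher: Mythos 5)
Your proposal is correct and follows essentially the same route as the paper, which in fact offers no standalone proof at all: the theorem is treated as immediate from Lemma~\ref{lemma:pivotality} (reducing $\rho_{ii}$ to the pivotality indicator) together with the cited Proposition~2 of \citep{GorlachMotz2024} on averaging pivotality under the two distributions. Your elaboration of the induced distribution over coalitions and your flagged verification that the Pareto-efficient selection of $\mu$ matches the coalition-formation map are exactly the details the paper leaves implicit, so your write-up is, if anything, more complete than the original.
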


The diagonal bargaining power measure $\rho_{ii}$ presented here offers a conceptual and computational simplification over the formulation in \citep{GorlachMotz2024}. While their axiomatic approach requires integration over a potentially vast space of utility functions, our measure is determined by a limited set of strategically meaningful points within a well-defined preference space. This construction not only makes the measure more tractable but also embeds it within a richer geometric framework that explicitly models the network of affinities between players.

Although, a close translation of those classical power indices in the preference space is possible only through the coalitional game conversion (\ref{coalitional}).

\subsubsection{Center of Mass (CoM)}

The Center of Mass (CoM) of any function defined on the preference space $\mathbb{V}$ 
is a key measure that summarizes the positioning of that function within $\mathbb{V}$. 
In general, it is defined as the weighted expectation of preference positions, 
where the weight is given by the function itself.

\begin{theorem}[Well-definedness of the Center of Mass]
Let $\mathbb V\subset \mathbb R^{n\times n}$ be the preference space.  
Suppose $f:\mathbb V\to [0,M]$ with $M>0$ is measurable and not identically zero. Then
\begin{equation}\label{CoM}
V_f^M \;=\; \frac{\displaystyle \int_{\mathbb V} V f(V)\,dV}{\displaystyle \int_{\mathbb V} f(V)\,dV}
\end{equation}
is well defined as an element of $\mathbb R^{n\times n}$.
\end{theorem}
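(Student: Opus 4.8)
The plan is to verify three things in turn: that the denominator $\int_{\mathbb V} f\,dV$ is finite, that it is strictly positive, and that every entry of the numerator matrix $\int_{\mathbb V} V f(V)\,dV$ is finite. Granting these, the quotient is the entrywise division of a finite real matrix by a positive scalar, hence a well-defined element of $\mathbb R^{n\times n}$, which is exactly the claim. I would note in passing that the theorem asserts membership in $\mathbb R^{n\times n}$, not in $\mathbb V$; indeed the rows of $V_f^M$ need not remain unit vectors, so no normalization needs to be checked.

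First I would record the geometric structure of the domain, since this is the workhorse behind every bound. Because $\mathbb V=(S^{n-1})^n$ is a finite product of unit spheres, it is compact, and under the natural product of surface measures its total mass $|\mathbb V|=(\omega_{n-1})^{\,n}$ is finite, where $\omega_{n-1}$ denotes the surface area of $S^{n-1}$. With a finite ambient measure in hand, the finiteness claims reduce to domination by constants.

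For the denominator, measurability of $f$ guarantees the integral exists in $[0,\infty]$, and the uniform bound $0\le f\le M$ gives $\int_{\mathbb V} f\,dV \le M\,|\mathbb V| < \infty$. For the numerator, the $(i,j)$ entry is $\int_{\mathbb V} v_{ij}\,f(V)\,dV$; the coordinate map $V\mapsto v_{ij}$ is continuous, hence measurable, and satisfies $|v_{ij}|\le \|\mathbf v_i\|=1$ because each row lies on $S^{n-1}$. Therefore $|v_{ij}\,f(V)|\le M$ pointwise, and each entry integral is absolutely bounded by $M\,|\mathbb V|<\infty$, so the numerator is a genuine real matrix.

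The main obstacle is the strict positivity of the denominator, and here I would sharpen the hypothesis rather than the argument. A merely pointwise ``not identically zero'' is insufficient: a measurable $f$ supported on a null set (a single point, say) integrates to zero and leaves $V_f^M$ undefined. The statement is therefore to be read in the measure-theoretic sense, namely that $f>0$ on a set of positive measure, equivalently that $f$ is not zero almost everywhere. Under this reading $\int_{\mathbb V} f\,dV>0$, so the division is legitimate; combining this with the finiteness bounds above completes the proof. I would make this measure-theoretic clause explicit in the statement to close the gap.
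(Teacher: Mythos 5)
Your proof is correct and follows essentially the same route as the paper's: compactness of $\mathbb V$ plus the uniform bound $f\le M$ gives finiteness of both integrals, boundedness of the entries of $V$ (you use $|v_{ij}|\le 1$ entrywise, the paper uses $\|V\|_F=\sqrt n$; these are interchangeable) gives finiteness of the numerator, and positivity of the denominator legitimizes the division. The one place where you genuinely go beyond the paper is the positivity step, and your criticism is well taken: the paper's proof simply asserts that ``because $f$ is not identically zero and always positive, the denominator is strictly positive,'' which is not valid as stated --- a nonnegative measurable $f$ that is positive only on a set of measure zero is not identically zero yet integrates to zero, leaving $V_f^M$ undefined. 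Your proposed repair (read the hypothesis as ``$f$ is not zero almost everywhere,'' equivalently $f>0$ on a set of positive measure) is exactly what is needed, and stating it explicitly makes the theorem correct rather than merely morally correct. So: same architecture, but your version closes a real gap in the paper's own argument.
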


\begin{proof}
Since $f$ is bounded by $M$ and $\mathbb V$ is compact, both integrals are finite.  
Moreover, because $f$ is not identically zero and always positive, the denominator is strictly positive.  

Each $V\in\mathbb V$ has rows of unit length, hence $\|V\|_F=\sqrt{n}$ for all $V$, so the numerator’s integrand $V f(V)$ is uniformly bounded. Thus the numerator is finite.  

Therefore the quotient is well defined in $\mathbb R^{n\times n}$.
\end{proof}

The CoM is no longer a unit-vector matrix. The norm of each row vector of a CoM is an important measure of its strength. When this norm approaches the estimated computational error, we say that the row is \textit{degenerate}, meaning that it is equally balanced in all directions.

The CoM of the \emph{expected payoff} of player $i$ identifies the average preference configuration weighted by the objective payoff 
that player $i$ receives across the preference space, reflecting overall strategic tendencies. Specifically, we are interested in analyzing the shift of a player's $V_{\mathbb{E}_i}^M$ from $\mathbf{1} \mathbf{e}_j^\top$ (the expected maximum point for $i$). 

The CoM can also be computed for a \emph{single outcome} $o$. 
In this case, the $\mu_o(V)$ function is projected onto the preference space ($V_{\mu_o}^M$) and the CoM identifies the average preference configuration that sustains it. Two outcomes of the same game may therefore have 
very different CoMs: cooperative outcomes tend to cluster in regions of high reciprocity, 
while exploitative outcomes typically fall into hierarchical or antagonistic regions. 

Similarly, one can define the CoM of a \emph{strategy}. Here the CoM summarizes the 
distribution of preference profiles under which a given strategy is played. 
The self-component of this CoM describes the type of relational stance (altruistic, selfish, antagonistic) 
that most characterizes the strategy for the player who adopts it. 
The other-component, instead, reveals the context in which the strategy is typically sustained: 
for example, whether a player cooperates only when the opponent is cooperative, or also in the face of 
antagonistic preferences. In this sense, the CoM of a strategy provides a “relational fingerprint” 
of the player’s behavior, showing both its internal motivation and the external conditions that support it.

\subsubsection{General Indices in the Preference Space}

We need general indices for classifying any $V \in \mathbb{V}$: diagonalization $D$, hierarchy $H$, and reciprocity $R$. 
These indices must satisfy the following requirements:

\begin{enumerate}
  \item The $D$ index must reflect the auto-orientation of edges in the graph (self-loops), 
  and attain its maximum if and only if $V$ is diagonal.
  
  \item The $H$ index must reflect the asymmetry in power between pairs of nodes, 
  and attain its maximum if and only if $V$ is antisymmetric.
  
  \item The $R$ index must reflect the symmetry and mutualism between pairs of nodes, 
  and attain its maximum if and only if $V$ is symmetric with zero diagonal.
  
  \item If the weights of the edges between two nodes have opposite signs, i.e.\ if one favors the other while the other harms the first, 
  then their contribution to $R$ must be 0. 
  If instead they have the same sign, their contribution to $R$ must be proportional 
  to the part of the weights they have in common.
  
  \item The indices must be continuous function of $V$ and mutually exclusive, 
  so that $D + H + R = k$, for some constant $k$ independent of $V$.
\end{enumerate}

The following are the simplest indices that satisfy these properties.  
Remember that any $V \in \mathbb{V}$ is an $n \times n$ preference matrix with unit row vectors 
($\sum_k v_{ik}^2 = 1$). 
Define the square matrix $W$ by $w_{ij} = v_{ij}^2$ and the signed square matrix $\widetilde{W}$ by 
\begin{equation}
\tilde{w}_{ij} = \operatorname{sign}(v_{ij}) \, w_{ij}.
\end{equation}

The indices of Diagonalization ($D$), Hierarchy ($H$), and Reciprocity ($R$) are defined as
\begin{align*}
D &= \frac{1}{n} \sum_{i=1}^{n} w_{ii}, \\
H &= \frac{1}{n} \sum_{i < j} \left| \tilde{w}_{ij} - \tilde{w}_{ji} \right|, \\
R &= \frac{1}{n} \sum_{i < j} \left( w_{ij} + w_{ji} - \left| \tilde{w}_{ij} - \tilde{w}_{ji} \right| \right).
\end{align*}

The indices are non-negative and form a complete partition of the total relational structure:
\begin{equation}
R + H + D = 1, 
\qquad \text{with} \quad 0 \leq R, H, D \leq 1.
\end{equation}
$D$ captures the degree to which players focus on themselves (self-referential strategies), rather than being influenced by or influencing others.
$H$ measures asymmetry in strategic relationships, revealing dominance or directional imbalance in power.
$R$ measures the overall symmetric interaction between players. High values indicate mutual reinforcement or alignment in preferences.

Note that for two corresponding off-diagonal elements of $\tilde{W}$, 
\[
w_{ij} + w_{ji} - \left|\tilde{w}_{ij} - \tilde{w}_{ji}\right|
\]
is equal to twice the minimum of the absolute values of the two, if they share the same sign, 
and equal to $0$ if they have opposite signs. 
This satisfies the fourth property: if the preferences that two players have toward each other 
are opposite, there is no reciprocity; if they share the same sign, 
the reciprocal part is the minimum (in absolute value) between $w_{ij}$ and $w_{ji}$ 
for both players.

It is also possible to distinguish positive from negative reciprocity, i.e.\ 
cooperation versus antagonism, by splitting the index into two sub-indices:
\[
R = R^+ + R^-,
\]
where $R^+$ ($R^-$) contains only the positive (negative) off-diagonal coupled entries.

\subsubsection{Global Center of Mass of the Payoffs (GCoM)}

A direct application of the indices $D,H,R$ to the CoM of a player’s
expected payoff is not informative, since by Theorem~\ref{th:expectedmax} the expected maximum
point $\mathbf{1}\mathbf{e}_j^\top$ is always a global maximizer. In particular, we have $H(\mathbf{1}\mathbf{e}_j^\top)=1$, but the fact that the CoM lies close to the expected maximum point is not in itself evidence of hierarchy in the game. It is trivial to observe that each player desires
to be a dictator, and this trivial fact should not be confused with a meaningful structural property of the interaction.

The CoM of a player’s payoff represents the direction in the preference space 
toward which that player would like to move every other player (and themselves) 
in order to maximize their own payoff in the local neighborhood of the space. The only row of $V_{\mathbb{E}_i}^M$ aligned with the center $I$ is the $i$-th row, that records precisely this orientation for the player
in question. If we interpret payoff, in line with standard economic theory, as
an ex-post quantity---representing either the player’s actual choices or their
possible choices under counterfactual hypotheses---then the self-row of the
CoM provides a clear ex-post representation of the attitude exhibited by that
player toward others.

Therefore, if we want to apply the indices to the CoM of payoff functions,
we must construct a global CoM that gathers only the self-rows. In this way
we obtain a synthetic measure of the attitudes of players toward one another,
filtering out the trivial component that each player maximizes their own payoff.

Formally, let $\mathbf{e}_i \mathbf{e}_i^\top$ denote the diagonal unit matrix
with only the $i$-th row nonzero.
Then we define the \emph{global center of mass of the payoffs} (GCoM) as
\begin{equation}
V_{\mathbb{E}}^{M}
\;:=\;
\sum_{i=1}^{n} \mathbf{e}_i \mathbf{e}_i^\top V_{\mathbb{E}_i}^M.
\end{equation}
When applying the indices $D$,$H$,$R$, we interpret each row of $V_{\mathbb{E}}^{M}$ up to scalar normalization, in line with the projective nature of the preference space.

\subsubsection{Permutation Points and Reciprocity}
\label{sec:permutation}

In the framework of Grossi--Turrini, any player permutation $\xi$ can be viewed
as a transformation where each player $i$ evaluates outcomes via $u_{\xi(i)}$.
Within our preference space $\mathbb{V}$, these transformations correspond to
discrete \emph{permutation points} $V^\xi$, whose rows are canonical basis
vectors permuted according to $\xi$. Cyclic permutations encode \emph{reciprocal}
structures: for a $k$-cycle $\xi=(i_1\,i_2\,\dots\,i_k)$, the rotations
$\xi^t$ $(t=0,\dots,k-1)$ implement the idea that each $i_\ell$ plays ``for''
its successor $\xi(i_\ell)$. As shown by Grossi--Turrini, reciprocity is characterized by equilibrium in \emph{all}
rotations of the cycle. 

We will prove that, when equilibria occur at multiple
permutation points that “split” a pair $(i,j)$ across two directions
($i\!\to\!j$ and $j\!\to\!i$), the GCoM accumulates symmetric
off-diagonal mass and thus yields a strictly positive contribution to the
reciprocity index $R$ of the game.

\begin{lemma}[Permutation points in $\mathbb{V}$]
Let $\xi:N\to N$ be a permutation of players. The matrix
\[
V^\xi \;=\;
\begin{bmatrix}
\mathbf{e}_{\xi(1)} \\
\vdots \\
\mathbf{e}_{\xi(n)}
\end{bmatrix}
\in \mathbb{V}
\]
is a cardinal point of the preference space. For each player $i$,
\(
(V^\xi \mathbf{u})_i = \mathbf{u}_{\xi(i)}.
\)
\end{lemma}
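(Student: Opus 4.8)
The plan is to verify both claims by directly unpacking the relevant definitions; the statement is a bookkeeping verification rather than a substantive argument. First I would confirm membership in the preference space. By construction, the $i$-th row of $V^\xi$ is the canonical basis vector $\mathbf{e}_{\xi(i)}$, and since $\|\mathbf{e}_{\xi(i)}\|=1$ every row is a unit vector. Hence each row lies on $S^{n-1}$, so $V^\xi \in (S^{n-1})^n = \mathbb{V}$.

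Next I would check that $V^\xi$ is a cardinal point, i.e. that $V^\xi \in K$. The defining condition of $K$ requires that for each player $i$ there exist some $k\in N$ with $\mathbf{v}_i=\mathbf{e}_k$ or $\mathbf{v}_i=-\mathbf{e}_k$. Choosing $k=\xi(i)$ and the positive-sign disjunct satisfies this immediately, so $V^\xi\in K$ as claimed.

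For the subjective payoff identity, I would record the entries of $V^\xi$ as $v_{ij}=\delta_{j,\xi(i)}$ (Kronecker delta) and substitute into the definition $(V\mathbf{u})_i=\sum_j v_{ij}\,\mathbf{u}_j$. Every term of the sum vanishes except the one with $j=\xi(i)$, yielding $(V^\xi\mathbf{u})_i=\mathbf{u}_{\xi(i)}$, which is precisely the asserted equality.

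There is no genuine obstacle here: each step is a one-line consequence of a definition. The only point worth flagging is that bijectivity of $\xi$ is never actually invoked — both conclusions hold verbatim for an arbitrary map $N\to N$. The permutation hypothesis becomes essential only later, in the reciprocity analysis of Section~\ref{sec:permutation}, where the distinctness of the rows and the cyclic structure of $\xi$ are what drive the accumulation of symmetric off-diagonal mass in the GCoM.
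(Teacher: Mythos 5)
Your proof is correct and follows essentially the same route as the paper's: the paper's own proof is a one-line expansion $(V^\xi \mathbf{u})_i=\sum_k (\mathbf{e}_{\xi(i)})_k \mathbf{u}_k=\mathbf{u}_{\xi(i)}$, which is exactly your Kronecker-delta computation. You are in fact slightly more thorough — the paper leaves the membership in $\mathbb{V}$ and in the cardinal set $K$ implicit, and your observation that bijectivity of $\xi$ is never used (only becoming relevant in the later cycle-based reciprocity lemmas) is accurate.
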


\begin{proof}
The $i$-th row of $V^\xi$ is $\mathbf{e}_{\xi(i)}$, hence $(V^\xi \mathbf{u})_i=\sum_k (\mathbf{e}_{\xi(i)})_k \mathbf{u}_k=\mathbf{u}_{\xi(i)}$.
\end{proof}

\begin{lemma}[Convex reciprocity from split equilibria for a pair]
Fix distinct players $i\neq j$. Assume there exist two preference points
$V^{i\to j}, V^{j\to i}\in\mathbb{V}$ such that
\[
\text{row}_i(V_{i\to j})=\mathbf{e}_j,\qquad \text{row}_j(V_{j\to i})=\mathbf{e}_i,
\]
and an outcome $\bar o$ which is an equilibrium
at \emph{both} points (the rows of the other players may be fixed—possibly
differently—at the two points). Let $\mu$ be any probability measure used to
compute a global center of mass $\tilde{W}_{\mathbb{E}}^M$ that assigns
strictly positive weight to neighborhoods of both $V_{i\to j}$ and $V_{j\to i}$.
Then the pairwise contribution of $(i,j)$ to the reciprocity index of the GCoM
matrix $V^M$ is strictly positive:
\[
R_{(i,j)}(V^M)\;=\;\frac{1}{n}\Big(w^M_{ij}+w^M_{ji}-\big|\tilde w^M_{ij}-\tilde w^M_{ji}\big|\Big)\;>\;0.
\]
\end{lemma}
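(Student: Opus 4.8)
The plan is to reduce the inequality to a sign-and-nonvanishing statement about the two off-diagonal entries $v^M_{ij}$ and $v^M_{ji}$ of the GCoM matrix $V^M$. I would first invoke the algebraic identity recorded just after the definition of $R$: writing $w^M_{ij}=(v^M_{ij})^2$ and $\tilde w^M_{ij}=\operatorname{sign}(v^M_{ij})\,w^M_{ij}$, the bracket $w^M_{ij}+w^M_{ji}-|\tilde w^M_{ij}-\tilde w^M_{ji}|$ equals $2\min(w^M_{ij},w^M_{ji})$ when $v^M_{ij}$ and $v^M_{ji}$ have the same sign, and equals $0$ if they have opposite signs or if either vanishes. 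Thus $R_{(i,j)}(V^M)=\tfrac{2}{n}\min(w^M_{ij},w^M_{ji})$ in the same-sign case, and the lemma reduces to showing that $v^M_{ij}$ and $v^M_{ji}$ are simultaneously nonzero and equally signed; I would target the stronger conclusion that both are strictly positive.

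Next I would write each entry as an average of the coordinate function $V\mapsto v_{ij}$ against the probability measure $\mu$ underlying the GCoM, and split $\mathbb V$ into a neighborhood $N_1$ of $V_{i\to j}$, a neighborhood $N_2$ of $V_{j\to i}$, and the complement. Since $\operatorname{row}_i(V_{i\to j})=\mathbf e_j$ forces $v_{ij}=1$ at $V_{i\to j}$, continuity of the coordinate map lets me shrink $N_1$ so that $v_{ij}>1-\varepsilon$ throughout $N_1$; symmetrically $v_{ji}>1-\varepsilon$ on $N_2$. Because $\mu$ charges every neighborhood of both points by hypothesis, $\mu(N_1),\mu(N_2)>0$, so $N_1$ contributes strictly positively to $v^M_{ij}$ and $N_2$ contributes strictly positively to $v^M_{ji}$.

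The hard part is the sign of the remaining mass: at $V_{j\to i}$ only $\operatorname{row}_j$ is pinned, so the value of $v_{ij}$ there—and over the complement—may be negative and could in principle cancel the positive contribution of $N_1$. A naive localization does not rescue the claim, because asking the two-neighborhood decomposition alone to deliver both $v^M_{ij}>0$ and $v^M_{ji}>0$ (via the crude bounds $v^M_{ij}\ge(1-\varepsilon)\mu(N_1)-\mu(N_2)$ and $v^M_{ji}\ge(1-\varepsilon)\mu(N_2)-\mu(N_1)$, using $|v_{ij}|\le 1$) forces the incompatible pair $(1-\varepsilon)\mu(N_1)>\mu(N_2)$ and $(1-\varepsilon)\mu(N_2)>\mu(N_1)$, whose product yields $(1-\varepsilon)^2>1$. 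The resolution is structural rather than quantitative: in the situation of interest $V_{i\to j}$ and $V_{j\to i}$ are the cardinal (permutation) points above, whose rows are signed basis vectors, so every entry of $V$ on the relevant support lies in $\{0,1\}$ and the integrands $v_{ij},v_{ji}$ are everywhere non-negative. Cancellation is then impossible, and $v^M_{ij}\ge\int_{N_1}v_{ij}\,d\mu\ge(1-\varepsilon)\mu(N_1)>0$, and likewise $v^M_{ji}>0$. Hence both entries are positive—in particular nonzero and equally signed—and the reduction of the first step gives $R_{(i,j)}(V^M)=\tfrac{2}{n}\min(w^M_{ij},w^M_{ji})>0$, as claimed. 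I expect this control of the off-neighborhood sign to be the crux, and I would flag that the non-negativity of cardinal-point entries is precisely what makes the split-equilibria hypothesis yield reciprocity.
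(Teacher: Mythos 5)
Your opening reduction is identical to the paper's, and your diagnosis of the difficulty is in fact sharper than the paper's own proof: the published argument is, in its entirety, the naive inference you explicitly refuted. It reads, essentially, ``because $f$ puts positive mass near $V_{i\to j}$ (where $v_{ij}=1$) and near $V_{j\to i}$ (where $v_{ji}=1$), we get $v^M_{ij}>0$ and $v^M_{ji}>0$,'' and says nothing about what the remaining mass contributes. Your incompatibility computation (the product of the two crude bounds forcing $(1-\varepsilon)^2>1$) shows precisely why that inference is unavailable, so you have located a genuine gap --- but it is a gap in the paper's proof, and indeed in the lemma as stated.

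The problem is that your structural rescue does not close it. The lemma's hypothesis is that $\mu$ is \emph{any} probability measure charging neighborhoods of the two points; nothing confines its support to matrices with entries in $\{0,1\}$. Cardinal points themselves include rows of the form $-\mathbf{e}_k$ (the set $K$ is defined with both $\mathbf{e}_k$ and $-\mathbf{e}_k$), and in the intended application the weight for row $i$ of the GCoM is $\mathbb{E}_i(V)$, which is strictly positive on essentially all of $\mathbb{V}$ because objective payoffs are assumed positive; hence the relevant measure unavoidably charges regions where $v_{ij}<0$. This is visible in the paper's own Prisoner's Dilemma example, whose GCoM has negative off-diagonal entries ($R^-=0.68$). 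So your non-negativity assumption is neither granted by the hypotheses nor true in the application. In fact the lemma as stated is false: for $n=2$ take $V_{i\to j}=V_{j\to i}=S$, the swap matrix, let $\bar o$ be any equilibrium of the swapped game, and let $\mu=\tfrac12\delta_S+\tfrac12\delta_T$ where $T$ has rows $-\mathbf{e}_2$ and $\mathbf{e}_1$. All hypotheses hold, yet $v^M_{12}=\tfrac12(1)+\tfrac12(-1)=0$, so the pairwise bracket is $0+1-\lvert 0-1\rvert=0$ and $R_{(i,j)}(V^M)=0$. A correct statement must therefore add a hypothesis --- for instance that $\mu$ is supported where $v_{ij}\ge 0$ and $v_{ji}\ge 0$, or a quantitative lower bound on the mass of the two neighborhoods relative to the mass of the regions where those entries are negative. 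Stated openly, such a hypothesis makes your argument go through essentially verbatim; what you cannot do is present it as a feature of ``the situation of interest,'' because it is exactly the assumption that is missing from both the lemma and the paper's proof.
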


\begin{proof}
Start from the definition of CoM, \ref{CoM}.
Because $f$ puts positive
mass near $V_{i\to j}$ (where $v_{ij}=1$) and near $V_{j\to i}$ (where $v_{ji}=1$),
we get $v^M_{ij}>0$ and $v^M_{ji}>0$. Hence $w^M_{ij}=(v^M_{ij})^2>0$ and
$w^M_{ji}=(v^M_{ji})^2>0$, with equal (positive) sign; therefore
\(
w^M_{ij}+w^M_{ji}-|\tilde w^M_{ij}-\tilde w^M_{ji}|=2\min\{w^M_{ij},w^M_{ji}\}>0,
\)
yielding $R_{(i,j)}(V^M)>0$.
\end{proof}

\begin{lemma}[Cycle-based reciprocity and permutation points]
Let $\xi=(i_1\,i_2\,\dots\,i_k)$ be a dependence cycle in the sense of Grossi--Turrini. 
An outcome $o$ is reciprocal
for the cycle if and only if, for every $t=0,\dots,k-1$, $o$ is an
equilibrium at the permutation point $V^{\xi^t}$ (players in the cycle
evaluate outcomes via $u_{\xi^t(i)}$, while the other players keep their rows
fixed as prescribed by the dependence construction). In particular, for any
adjacent pair $(i_\ell,i_{\ell+1})$ in the cycle, $o$ is an equilibrium
when $i_\ell$ plays for $i_{\ell+1}$ given the positions of the other players in
that rotation.
\end{lemma}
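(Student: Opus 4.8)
The plan is to recognize the lemma as a translation of the Grossi--Turrini reciprocity condition into the language of permutation points, with the bridge supplied by the preceding permutation-point lemma. First I would recall the relevant Grossi--Turrini definitions: a dependence cycle $\xi=(i_1\,i_2\,\dots\,i_k)$ fixes every player outside $\{i_1,\dots,i_k\}$ and sends $i_\ell\mapsto i_{\ell+1}$ (indices mod $k$), and an outcome $o$ is \emph{reciprocal} for the cycle exactly when it remains an equilibrium under each rotation $\xi^t$, where under $\xi^t$ every cycle-player $i$ evaluates outcomes through the utility $u_{\xi^t(i)}$ while the remaining players retain their objective payoffs. Taking this biconditional characterization from Grossi--Turrini as given forms the backbone of the ``if and only if.''

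The central step is to identify the rotated-utility game with the subjective game at the permutation point. By the permutation-point lemma, $(V^{\xi^t}\mathbf{u})_i=\mathbf{u}_{\xi^t(i)}$ for every player $i$, so the row of $V^{\xi^t}$ assigned to a cycle-player $i_\ell$ is $\mathbf{e}_{\xi^t(i_\ell)}=\mathbf{e}_{i_{\ell+t}}$, while each non-cycle player keeps the canonical row $\mathbf{e}_m$ encoding its identity preference. Hence the game with subjective payoffs $V^{\xi^t}\mathbf{u}$ coincides, player by player, with the rotated-utility game that Grossi--Turrini use to define reciprocity. Because Nash equilibrium is computed with respect to these subjective payoffs, $o\in\mathrm{NE}(V^{\xi^t}\mathbf{u})$ holds if and only if $o$ is an equilibrium under the rotation $\xi^t$.

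Chaining the two equivalences over $t=0,\dots,k-1$ delivers the statement: $o$ is reciprocal for the cycle $\iff$ $o$ is an equilibrium under every rotation $\iff$ $o\in\mathrm{NE}(V^{\xi^t}\mathbf{u})$ for all $t$, i.e.\ $o$ is an equilibrium at all permutation points $V^{\xi^t}$. For the ``in particular'' clause I would specialize to $t=1$: there the cycle-player $i_\ell$ carries the row $\mathbf{e}_{\xi(i_\ell)}=\mathbf{e}_{i_{\ell+1}}$, so $i_\ell$ literally ``plays for'' its successor $i_{\ell+1}$, and equilibrium of $o$ at $V^\xi$ is already among the conditions secured above.

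The main obstacle I anticipate is definitional rather than computational: ensuring that the equilibrium notion embedded in the Grossi--Turrini dependence construction agrees with the Nash equilibrium of the subjective game $V^{\xi^t}\mathbf{u}$ used here, including the precise handling of players outside the cycle, whose rows must be held fixed at the identity as prescribed by that construction. If the two equilibrium concepts coincide, the biconditional drops out immediately from the permutation-point lemma; should Grossi--Turrini instead employ a refined or otherwise different solution concept, I would need to check that the translation preserves the equilibrium set, and this verification is where the only genuine content of the argument lies.
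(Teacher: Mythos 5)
Your proposal is correct and follows essentially the same route as the paper's own proof: both rely on Grossi--Turrini's characterization of reciprocity as equilibrium in every cycle-permuted game, and both use the permutation-point identification $(V^{\xi^t}\mathbf{u})_i = \mathbf{u}_{\xi^t(i)}$ to equate those permuted-utility games with the subjective games at the points $V^{\xi^t}$, chaining the two equivalences in both directions. Your explicit flagging of the need to verify that the equilibrium concepts coincide is a care point the paper leaves implicit, but it does not change the argument.
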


\begin{proof}
($\Rightarrow$) If $o$ is reciprocal for the cycle, Grossi--Turrini’s
Theorem~1 states that $o$ is an equilibrium in every game obtained by
permuting utilities along the cycle; evaluating at $V^{\xi^t}$ reproduces exactly
those permuted utilities, hence $o$ is an equilibrium at each $V^{\xi^t}$.

($\Leftarrow$) Conversely, if $o$ is an equilibrium at all $V^{\xi^t}$,
then it is an equilibrium in each corresponding cycle-permuted game; by
Grossi--Turrini’s equivalence, $o$ is reciprocal for the cycle.
\end{proof}

The three lemmas above connect Grossi--Turrini’s notion of reciprocity with the
geometry of the preference space. In particular, they show that whenever an outcome
is reciprocal in the sense of Grossi--Turrini, its equilibria appear in permutation
points that jointly generate a strictly positive contribution to the reciprocity index.
This establishes that the index $R$ applied to the global center of mass is a
substantive measure of mutual dependence in equilibrium analysis.

\begin{theorem}[Reciprocal outcomes and reciprocity of GCoM]
Every reciprocal outcome in the sense of Grossi--Turrini contributes
strictly positively to the reciprocity index $R$ of the global center of mass of
the players’ payoffs.
\end{theorem}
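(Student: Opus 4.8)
The plan is to read this statement as a corollary of the three preceding lemmas, chained in order. Let $o$ be a reciprocal outcome in the sense of Grossi--Turrini, and let $\xi=(i_1\,i_2\,\dots\,i_k)$ be the dependence cycle for which it is reciprocal. By the Cycle-based reciprocity lemma, $o$ is an equilibrium at every permutation point $V^{\xi^t}$, $t=0,\dots,k-1$; by the Permutation-points lemma, at each such point the players in the cycle evaluate outcomes through the rotated utilities $\mathbf{u}_{\xi^t(i)}$, i.e.\ the $i$-th row of $V^{\xi^t}$ is $\mathbf{e}_{\xi^t(i)}$. The goal is then to exhibit, for at least one pair, two of these rotation points that ``split'' the pair in opposite directions and to feed them into the split-equilibria lemma.

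Next I would fix an adjacent pair in the cycle, say $(i_1,i_2)$, and locate the two splitting rotations. At $t=1$ we have $\xi(i_1)=i_2$, so $\operatorname{row}_{i_1}(V^{\xi})=\mathbf{e}_{i_2}$: player $i_1$ plays for $i_2$, and $o$ is an equilibrium there. At $t=k-1$ we have $\xi^{k-1}(i_2)=i_{2+(k-1)}=i_1$, so $\operatorname{row}_{i_2}(V^{\xi^{k-1}})=\mathbf{e}_{i_1}$: player $i_2$ plays for $i_1$, and $o$ is again an equilibrium. These are exactly the points $V^{i\to j}$ and $V^{j\to i}$ demanded by the split-equilibria lemma; for $k=2$ the two rotations coincide in a single transposition point already carrying both directions $i_1\to i_2$ and $i_2\to i_1$, so the hypothesis is met a fortiori. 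Applying that lemma yields the strict inequality $R_{(i_1,i_2)}(V^M)>0$ for the pairwise reciprocity contribution of the GCoM. Because the global index decomposes as $R(V^M)=\sum_{i<j}R_{(i,j)}(V^M)$ with every summand non-negative, a single strictly positive pairwise term forces the reciprocal outcome's net contribution to $R$ to be strictly positive; since both split rows are positive basis vectors, this mass lands in the positive sub-index $R^+$.

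The step that carries the real content---and the main obstacle---is discharging the hypothesis of the (already-established) split-equilibria lemma: that the payoff-weighted measure defining the GCoM assigns strictly positive weight to neighborhoods of $V^{\xi}$ and $V^{\xi^{k-1}}$, so that the entry $v^M_{i_1 i_2}$ (weighted by $\mathbb{E}_{i_1}$) and the entry $v^M_{i_2 i_1}$ (weighted by $\mathbb{E}_{i_2}$) are both positive. Three facts combine to secure this. First, objective payoffs are strictly positive, so whenever $o$ lies in the support of $\mu$ it contributes a strictly positive amount $\mathbf{u}_{i_1}(o)$ to $\mathbb{E}_{i_1}(V^{\xi})$---precisely where $v_{i_1 i_2}=1$---and likewise $\mathbf{u}_{i_2}(o)$ to $\mathbb{E}_{i_2}(V^{\xi^{k-1}})$, where $v_{i_2 i_1}=1$. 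Second, and most delicately, one must check that $o$ is not merely a Nash equilibrium but a \emph{Pareto-efficient} one at these rotation points, since by the $\mu$-axioms the solution concept is supported only on $\mathrm{PE}(V)$; unlike the identical-interest configuration at the expected-maximum point, a cycle-permuted game need not be identical-interest, so this is a genuine requirement rather than an automatic consequence, and it is here that I would expect the argument to need the most care (possibly an additional refinement hypothesis on $\mu$ or on $o$). Third, a robustness argument is needed: a regular Pareto-efficient equilibrium persists under small perturbations of the preference matrix, so the positive contribution survives on an open neighborhood and the center-of-mass integral inherits strictly positive off-diagonal mass rather than a measure-zero spike. Granting these, the two off-diagonal entries share a positive sign, whence the lemma's computation $w^M_{i_1 i_2}+w^M_{i_2 i_1}-\bigl|\tilde w^M_{i_1 i_2}-\tilde w^M_{i_2 i_1}\bigr|=2\min\{w^M_{i_1 i_2},w^M_{i_2 i_1}\}>0$ closes the argument.
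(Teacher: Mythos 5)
Your proposal follows essentially the same route as the paper: the paper's own proof is a three-line chaining of the cycle-based reciprocity lemma with the split-equilibria lemma, exactly the reduction you carry out (including the observation that adjacent rotations of the cycle supply the two points $V^{i\to j}$ and $V^{j\to i}$). The obstacles you flag in your final paragraph---that $o$ must be \emph{Pareto-efficient}, not merely Nash, at the rotation points in order to lie in the support of $\mu$, and that the center-of-mass integral needs positive mass on open neighborhoods rather than at measure-zero permutation points---are genuine gaps, but they are gaps the paper's proof shares: it silently assumes the split-equilibria lemma's hypothesis on the measure is satisfied, so your treatment is strictly more careful than the original.
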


\begin{proof}
By Lemma 3, reciprocal outcomes are equilibria at all permutation points
along a cycle. By Lemma 2, equilibria appearing at both $V_{i\to j}$ and
$V_{j\to i}$ ensure $v^M_{ij},v^M_{ji}>0$, hence a strictly positive term
in $R$. Therefore any reciprocal outcome yields a positive reciprocity
contribution in the GCoM.
\end{proof}

\subsubsection{Two players restriction}

In the case of two players, the preference space reduces to the product of two circles \(S^1 \times S^1\) or a torus (see Figure~\ref{fig:2pRPS}). Let the two players be denoted by $a$ and $b$.

\begin{figure}
\subfloat[\centering]{\includegraphics[width=7.0cm]{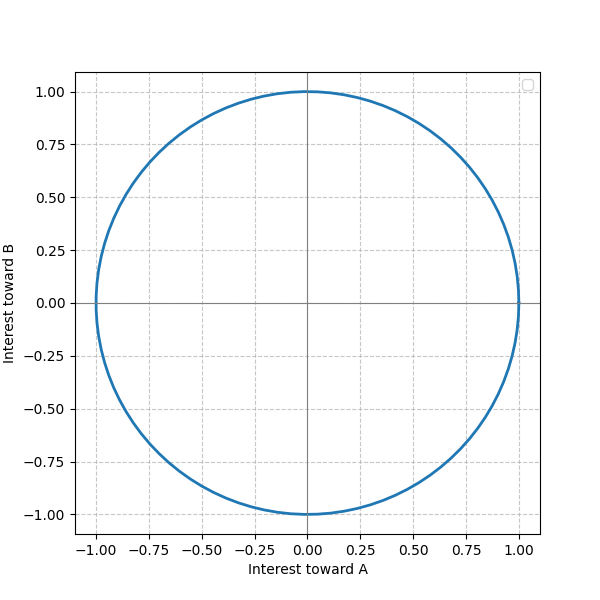}}
\subfloat[\centering]{\includegraphics[width=7.0cm]{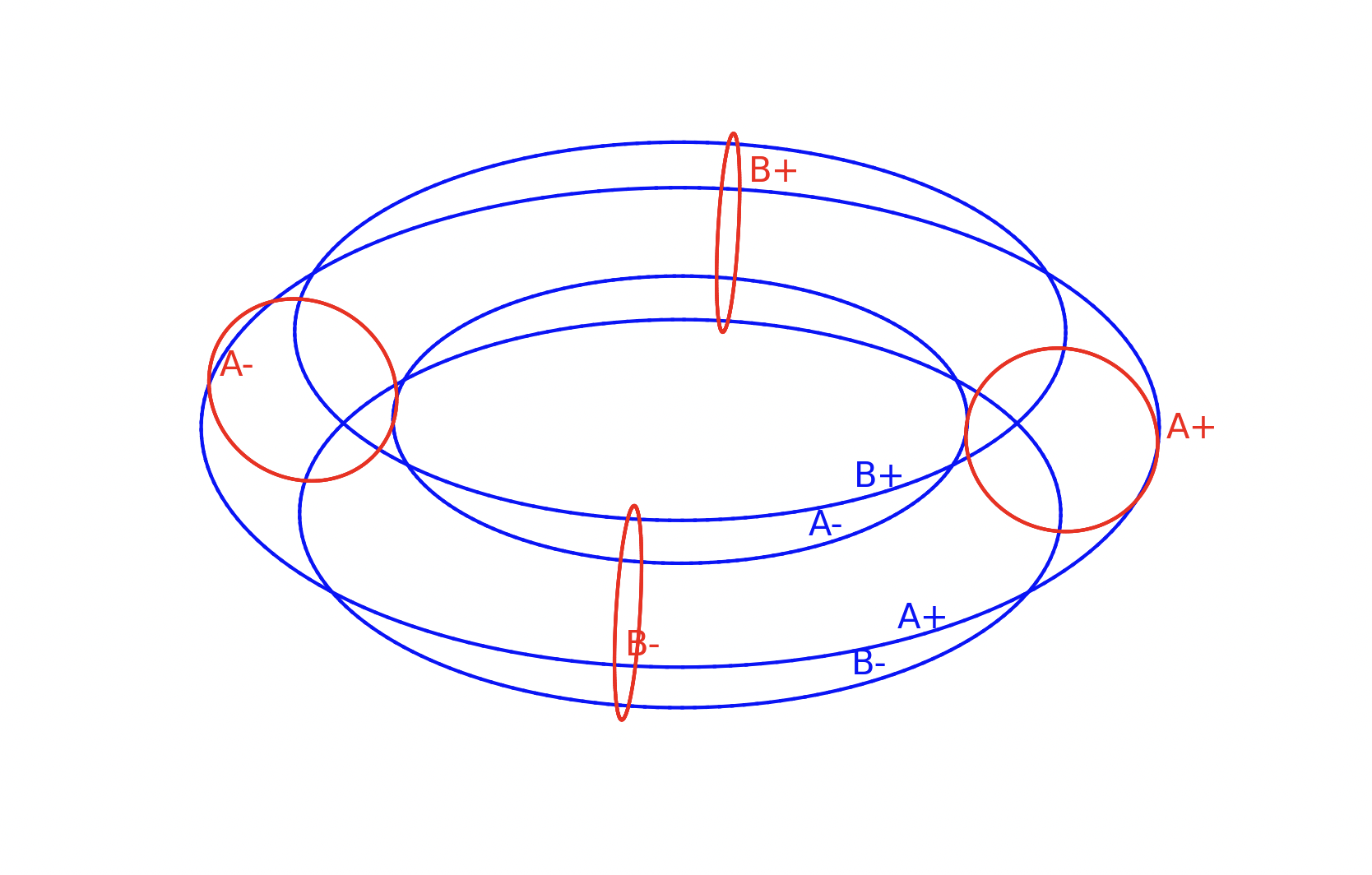}}
\caption{Two-player preference space representation. (\textbf{a}) Circle of preferences of a player. (\textbf{b}) Torus with cardinal circles. \label{fig:2pRPS}}
\end{figure} 

The preference space can be represented numerically by a two-by-two matrix like:

\begin{equation}
V = \begin{bmatrix}
v_{aa} & v_{ab} \\
v_{ba} & v_{bb} \\
\end{bmatrix} =
\begin{bmatrix}
cos(\alpha) & sin(\alpha) \\
cos(\beta) & sin(\beta) \\
\end{bmatrix}
\end{equation}

Moreover, since the rows of the matrix are unit vectors, we can replace the entries with the trigonometric functions of two angles \(\alpha\) and \(\beta\). In this way, we can reduce the dimensions and plot the preference space onto a square (this is only possible for two players).

\subsection{Application on Games}

\subsubsection{Prisoner's Dilemma}

Figure~\ref{fig:pd_outcomes} shows the
projection of the four outcomes of a Prisoner’s Dilemma onto the preference
space. Each panel corresponds to one outcome, with the red dot marking the
center of mass of the distribution and its indices $R$ (reciprocity) and $H$
(hierarchy) reported below. We used the following version of the Prisoner's Dilemma:

\[
\begin{array}{c|cc}
 & \text{L} & \text{R} \\
\hline
\text{U} & (2,2) & (0,3) \\
\text{D} & (3,0) & (1,1) \\
\end{array}
\]

\begin{figure}
\subfloat[\centering]{\includegraphics[width=7.0cm]{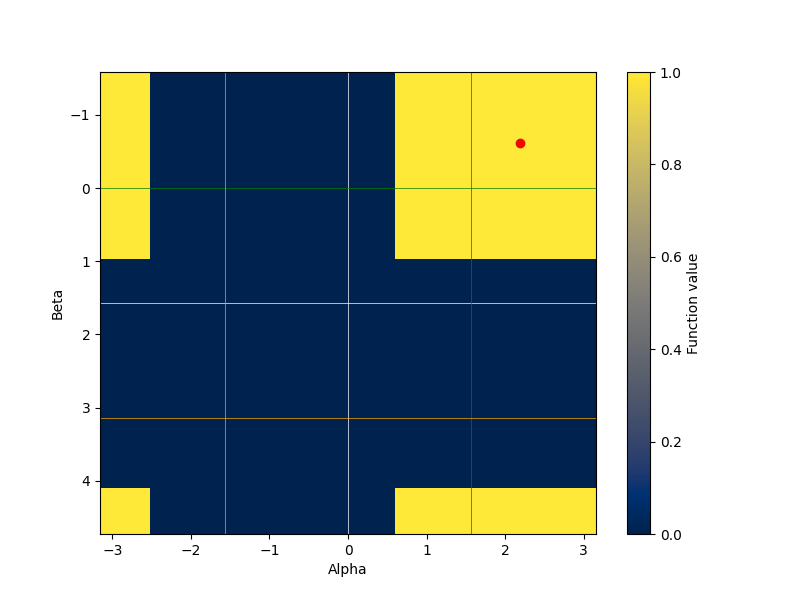}}
\subfloat[\centering]{\includegraphics[width=7.0cm]{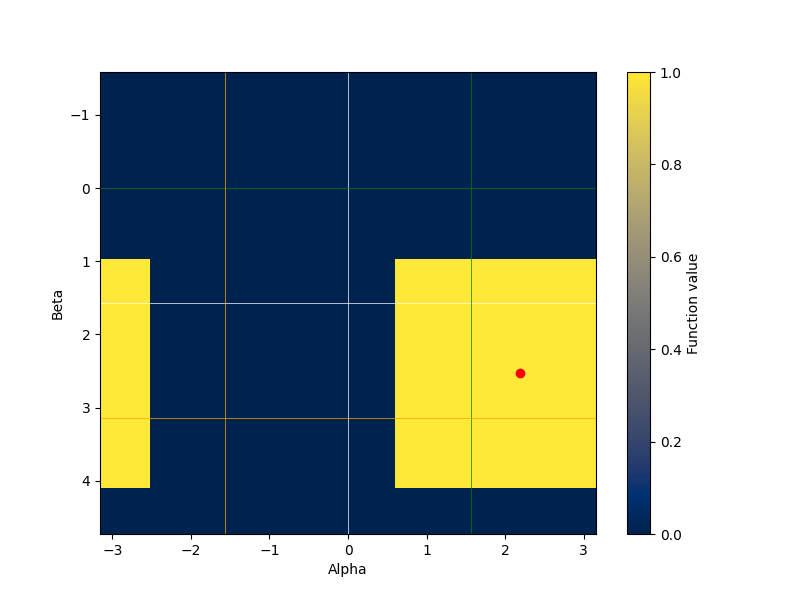}}\\
\subfloat[\centering]{\includegraphics[width=7.0cm]{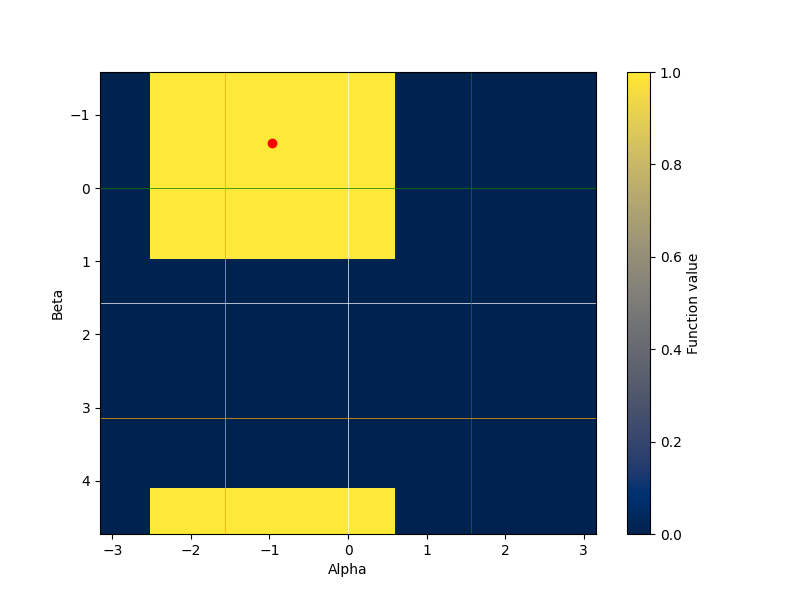}}
\subfloat[\centering]{\includegraphics[width=7.0cm]{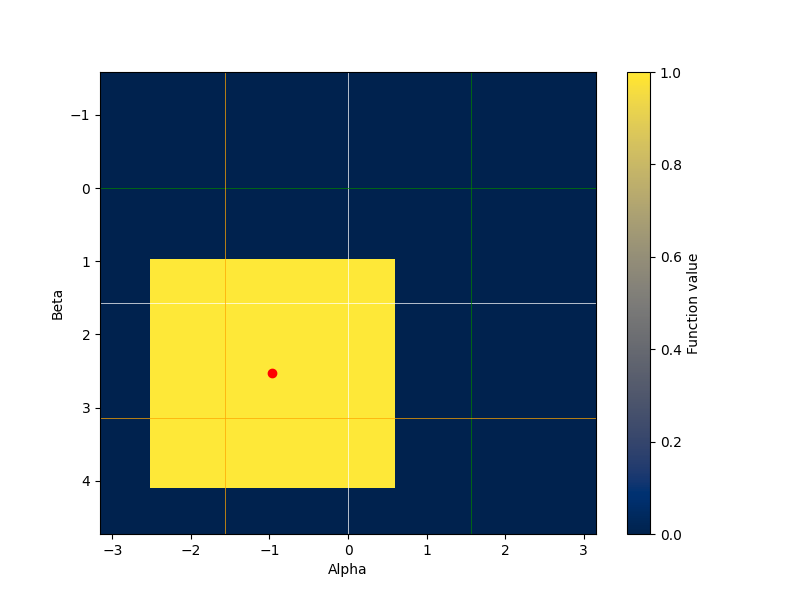}}
\caption{Projection of the four outcomes of a PD onto the preference space. Note that each outcome is positioned in a different zone. The indices are coherent with our intuitions: (UL) is positively reciprocal, (UR) and (DL) are hierarchical, (DR) is negatively reciprocal. (\textbf{a}) Projection of the outcome (UL): $R^+=0.66,\, H=0$. (\textbf{b}) Projection of the outcome (UR): $R=0,\, H=0.67$. (\textbf{c}) Projection of the outcome (DL): $R=0,\, H=0.67$. (\textbf{d}) Projection of the outcome (DR): $R^-=0.66,\, H=0$. \label{fig:pd_outcomes}}
\end{figure} 

This example highlights three important points:
\begin{enumerate}
    \item Outcomes occupy different zones of the preference space, reflecting
    their relational structure.
    \item The indices $R$ and $H$ provide a concise quantitative classification:
    reciprocity (positive or negative) versus hierarchy. 
    \item The method aligns with intuition: cooperative outcomes cluster in
    reciprocal regions, while exploitative or asymmetric outcomes appear in
    hierarchical or antagonistic zones.
\end{enumerate}

The analysis can also be conducted from the perspective of the players’
payoff functions. Figure~\ref{fig:PD_payoff} shows the projection of the two
players’ payoffs. Their centers of mass lie in different regions, shifted away
from their expected maxima, revealing asymmetries in how each player benefits
from the relational structure.
From the aggregation of the players’ CoMs we obtain the Global Center of Mass
(GCoM) of the game. For the Prisoner’s Dilemma in Figure~\ref{fig:pd_outcomes}
and Figure~\ref{fig:PD_payoff}, the GCoM yields the indices
\[
H = 0, 
\qquad R^{-} = 0.68,
\]
indicating a purely antagonistic relation with no hierarchical component. Note that, while every PD has $H=0$ and $R^->0$, the exact value of $R$ depends on the value of the payoff matrix. This not only matches the intuitive characterization of the game as one of mutual
defection and negative reciprocity, but also allows to characterize different PDs for their grade of antagonism.

\begin{figure}
\subfloat[\centering]{\includegraphics[width=7.0cm]{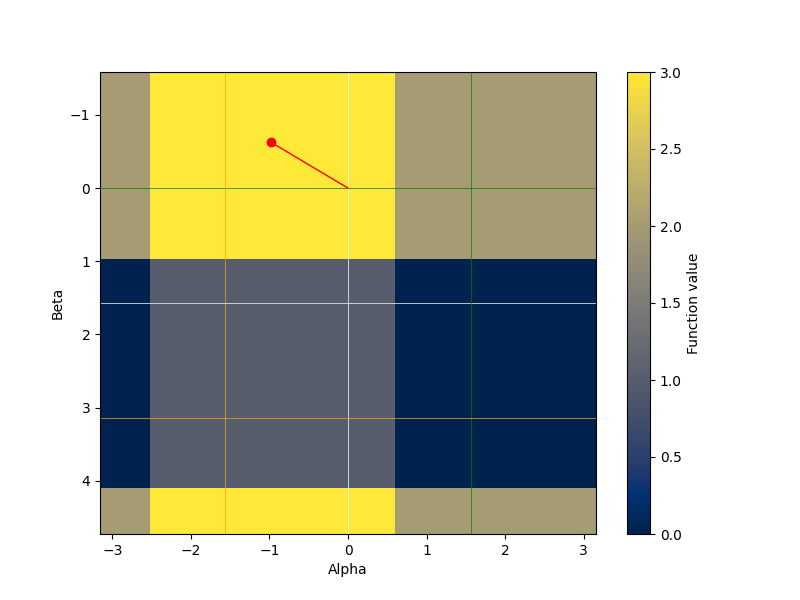}}
\subfloat[\centering]{\includegraphics[width=7.0cm]{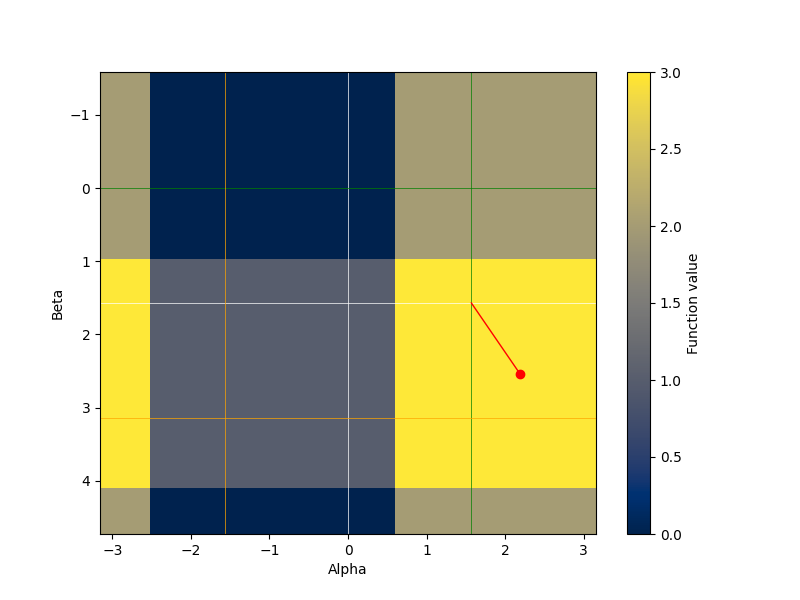}}
\caption{Projection of the payoff functions of the two players in a PD. Note that they lie in different sectors. Note also the shift of their centers of mass from their expected maximum (highlighted by a red line): for player $a$, the expected maximum is at the intersection between the white vertical and the green horizontal line $(0,0)$; for player $b$, at the intersection between the white horizontal and the vertical green line $(\frac{\pi}{2}, \frac{\pi}{2})$. (\textbf{a}) Projection of the payoff of player $a$. (\textbf{b}) Projection of the payoff of player $b$. \label{fig:PD_payoff}}
\end{figure} 

\begin{figure}
\includegraphics[width=10.0 cm]{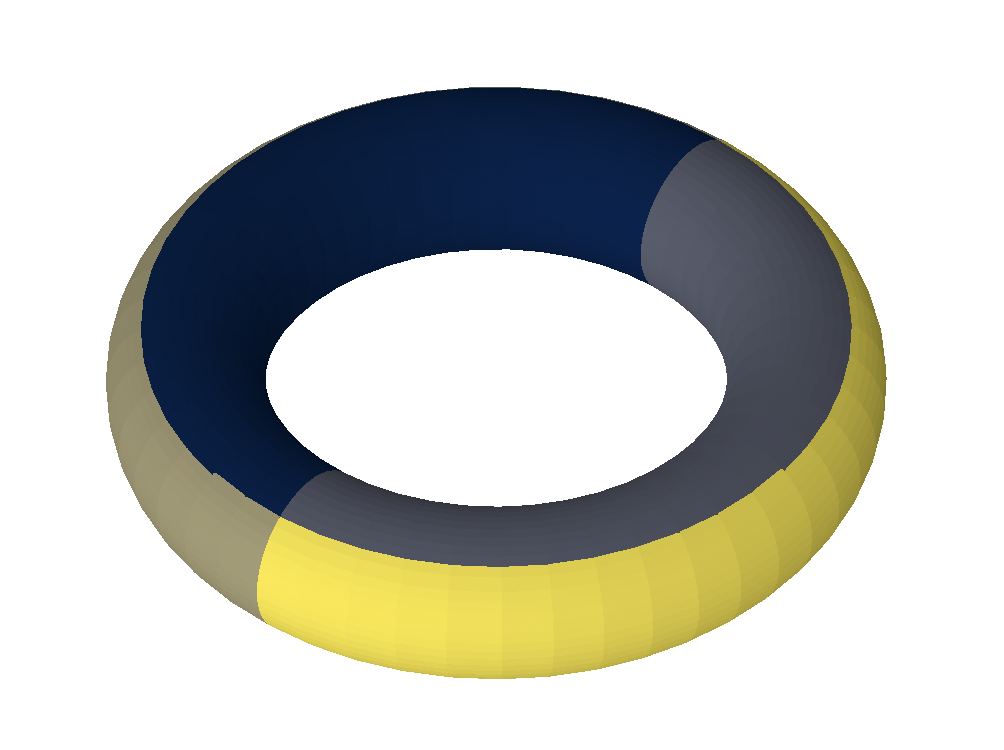}
\caption{Torus projection of the payoff of player $a$ in a PD.\label{fig:PDtorus}}
\end{figure}   
\unskip

\subsubsection{Battle of the Sexes}

The Battle of the Sexes is a classical coordination game, and thus we expect to observe positive reciprocity.  
We consider the following version of the game:
\[
\begin{array}{c|cc}
 & \text{L} & \text{R} \\
\hline
\text{U} & (3,2) & (0,0) \\
\text{D} & (0,0) & (2,3) \\
\end{array}
\]

As shown in Figure~\ref{fig:BS_payoff}, the two centers of mass lie in the same region of the preference space, where the players’ preferences are positively correlated.  
The GCoM in this case yields an index \(R^+ = 0.38\), while \(H = 0\), since the game is symmetric.  
The reciprocity index \(R\) can naturally be increased by adding a constant value to the payoffs of both players in outcomes \((U,L)\) and \((D,R)\).  

In Figure~\ref{fig:BS_outcome} we observe that the region of the preference space where a single outcome constitutes an equilibrium is located precisely between two cardinal points (for instance, the yellow tongue between the white and orange vertical lines).  
This demonstrates that a discrete preference space, such as the one considered in~\citep{GrossiTurrini2012}, is insufficient to capture the full range of possible mixed coalitions between players.  

\begin{figure} 
\subfloat[\centering]{\includegraphics[width=7.0cm]{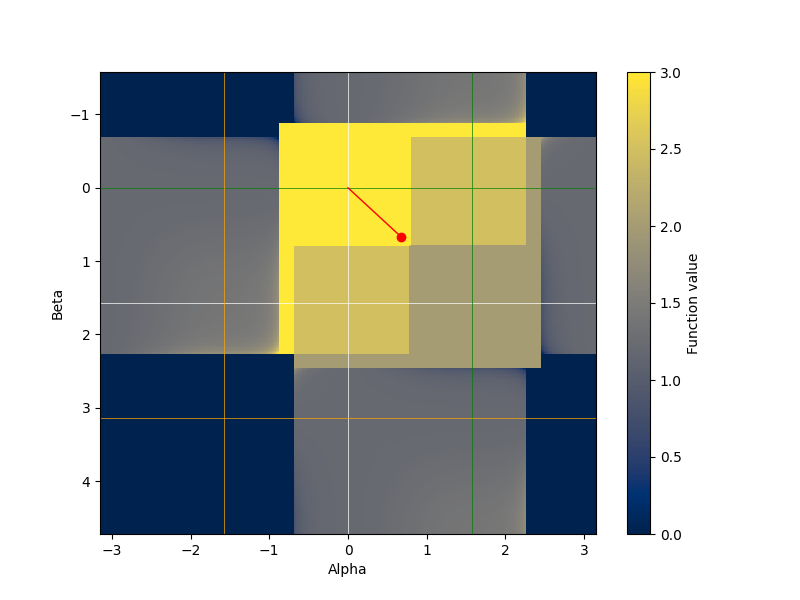}}
\subfloat[\centering]{\includegraphics[width=7.0cm]{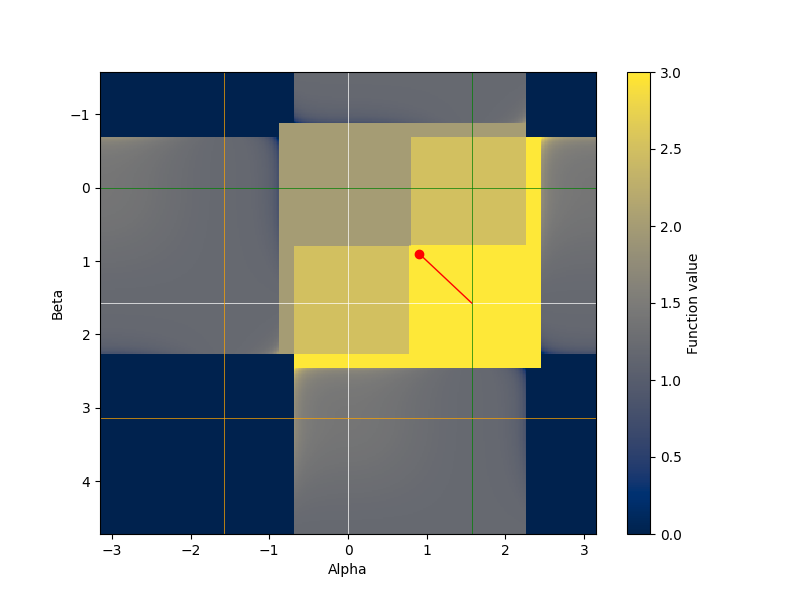}}
\caption{Projection of the payoff functions of the two players in a Battle of the Sexes. Note that they lie in the same sector. (\textbf{a}) Projection of the payoff of player $a$. (\textbf{b}) Projection of the payoff of player $b$. \label{fig:BS_payoff}}
\end{figure} 

\begin{figure}
\includegraphics[width=10.0 cm]{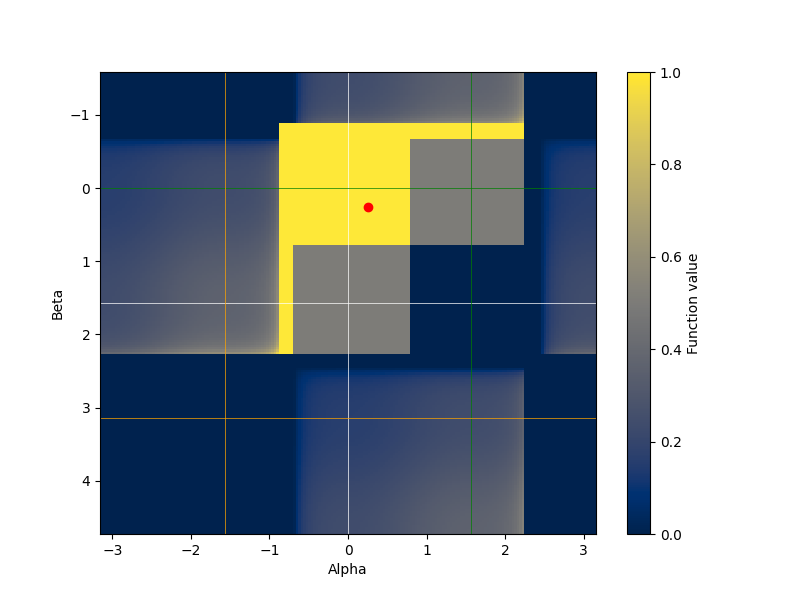}
\caption{Battle of the Sexes. Projection of the outcome (U,L) in the preference space. Note that some significant parts of the yellow region lie in between two cardinal points.\label{fig:BS_outcome}}
\end{figure}   
\unskip

\subsubsection{Asymmetric Games}

Only asymmetric games exhibit a non-zero value of the \(H\) index.  
The graphs corresponding to the two players are therefore necessarily distinct.  
Figure~\ref{fig:Asymmetric} presents a simple asymmetric game with the following payoff matrix:

\[
\begin{array}{c|cc}
 & \text{L} & \text{R} \\
\hline
\text{U} & (1,1) & (3,0) \\
\text{D} & (0,2) & (2,3) \\
\end{array}
\]

This game displays an intriguing property.  
As usual, the standard equilibrium is given by the intersection of the two white lines.  
However, in this case we observe a paradox: the expected payoff of player~\(a\) at this point is strictly lower than the payoff attained at the two extreme points of the same horizontal line (i.e., without any change in player~\(b\)’s preference position). In other words, we have an index of diagonal power below zero, namely $\rho_{aa}=-\frac{1}{3}$. The reader may verify that player~\(a\) achieves a better outcome by \emph{minimizing}, rather than maximizing, their own payoff.  
We refer to this phenomenon as the \emph{Self-Harm Paradox}.   

\begin{figure}
\subfloat[\centering]{\includegraphics[width=7.0cm]{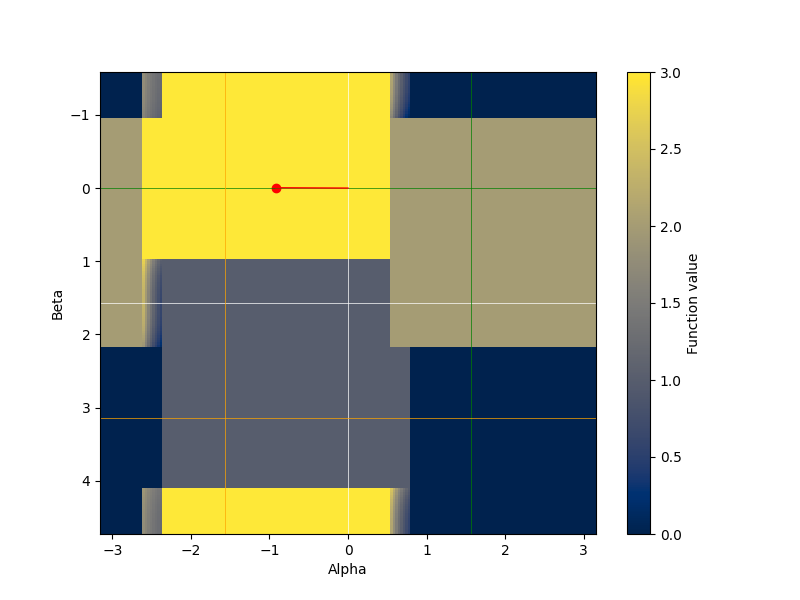}}
\subfloat[\centering]{\includegraphics[width=7.0cm]{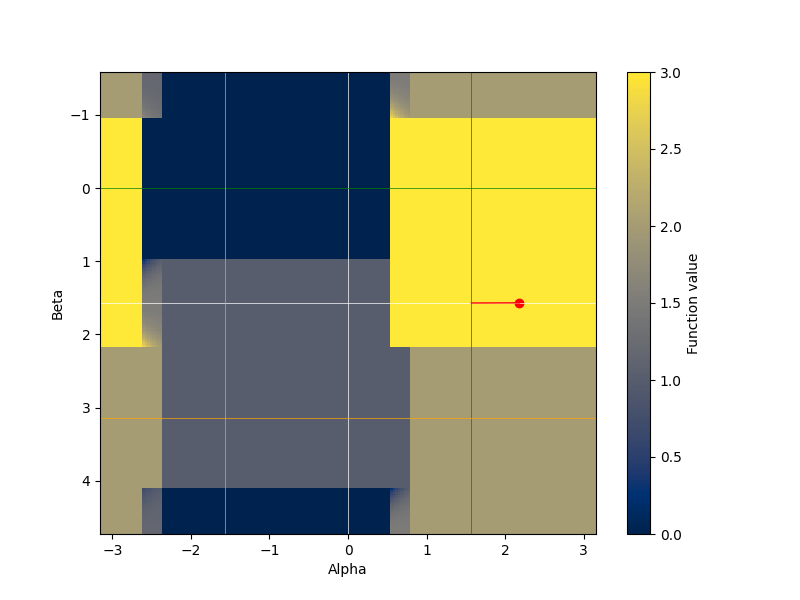}}\\
\caption{Projection of the two players' payoffs in a \textit{Self-Harm Paradox}. (\textbf{a}) Projection of the payoff of player $a$. (\textbf{b}) Projection of the payoff of player $b$. \label{fig:Asymmetric}}
\end{figure} 

\subsubsection{Matching Pennies and Rock--Paper--Scissors}

The preference space and the CoM provide a classification that goes beyond the traditional typologies of games.  
They enable a deeper analysis that not only refines distinctions among various forms of the Prisoner’s Dilemma or the Battle of the Sexes, but also permits the comparison of seemingly different games within a uniform system of measurement.  
This feature becomes particularly evident when the number or type of strategies (discrete or continuous) is varied.  

\[
\begin{array}{c|cc}
    \ (1,0) & (0,1) \\
    \ (0,1) & (1,0) \\
\end{array}
\hspace{3cm}
\begin{array}{c|c|cc}
    \ (1,1) & (0,2) & (2,0) \\
    \ (2,0) & (1,1) & (0,2) \\
    \ (0,2) & (2,0) & (1,1) \\
\end{array}
\]
\[
\text{Matching Pennies} \hspace{3cm} \text{Rock--Paper--Scissors}
\]

Figure~\ref{fig:MPRPS} illustrates the classic Matching Pennies and Rock--Paper--Scissors games.  
Although their strategy spaces differ---the former offering two strategies per player and the latter three---the projection of payoffs onto the preference space and the resulting positions of the CoM are identical, since the two games are structurally equivalent.  
This provides further evidence of the independence of the preference space from the strategic level.  

\begin{figure}
\subfloat[\centering]{\includegraphics[width=7.0cm]{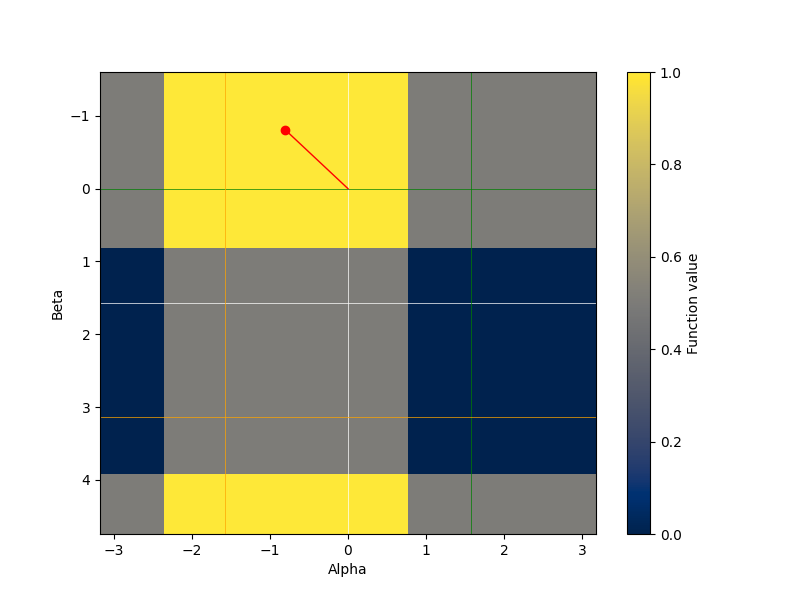}}
\subfloat[\centering]{\includegraphics[width=7.0cm]{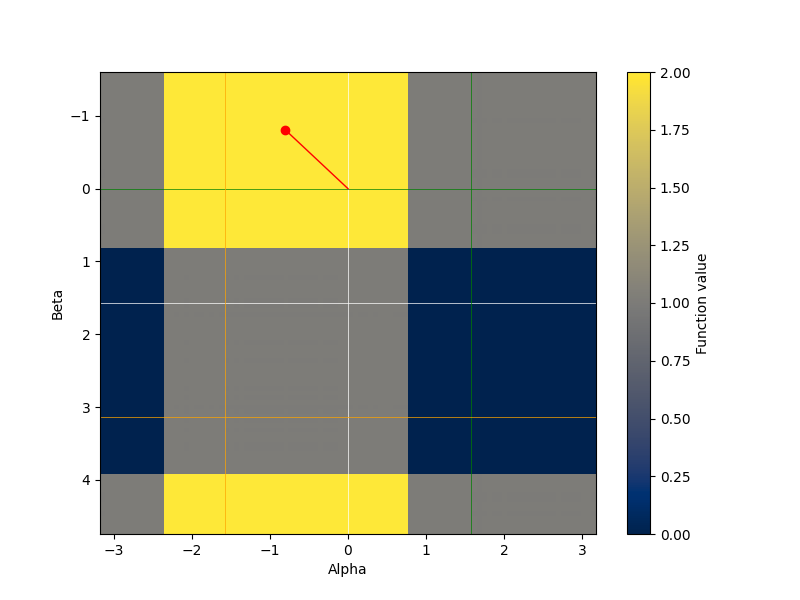}}\\
\caption{Projection of the payoff of player $a$ in: (\textbf{a}) Matching Pennies, (\textbf{a}) Rock--Paper--Scissors. \label{fig:MPRPS}}
\end{figure} 

\subsubsection{Economic model: Cournot}

In Figure~\ref{fig:Cournot} the reader can appreciate the fact that preference space is able to represent in the same space discrete games and continuous economic models. Here a classic Cournot Duopoly is analyzed. The price function is \( P=(a-Q)\) with \(Q=q_1+q_2\) and the profit functions are \(\pi_i=q_i(P-c)\) where \(c\) is the cost. The values of \(a\) and \(c\) are set respectively to \(10\) and \(2\). We also imposed an upper and lower limit for the quantities: \(0<q_i<(a-c)/2\).

The classical Cournot equilibrium is placed at the intersection of the two white lines.
The CoM is placed in the left region, and we will have the same situation for the other firm. The position of the CoM shows the lightly competitive nature of the model. The GCoM gives $R^-=0.06$ and $H=0$.

\begin{figure}
\subfloat[\centering]{\includegraphics[width=7.0cm]{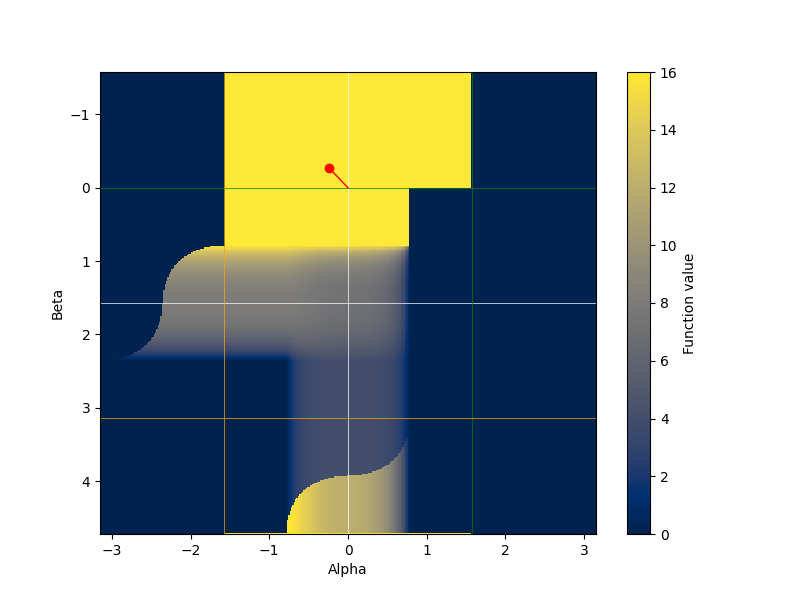}}
\subfloat[\centering]{\includegraphics[width=7.0cm]{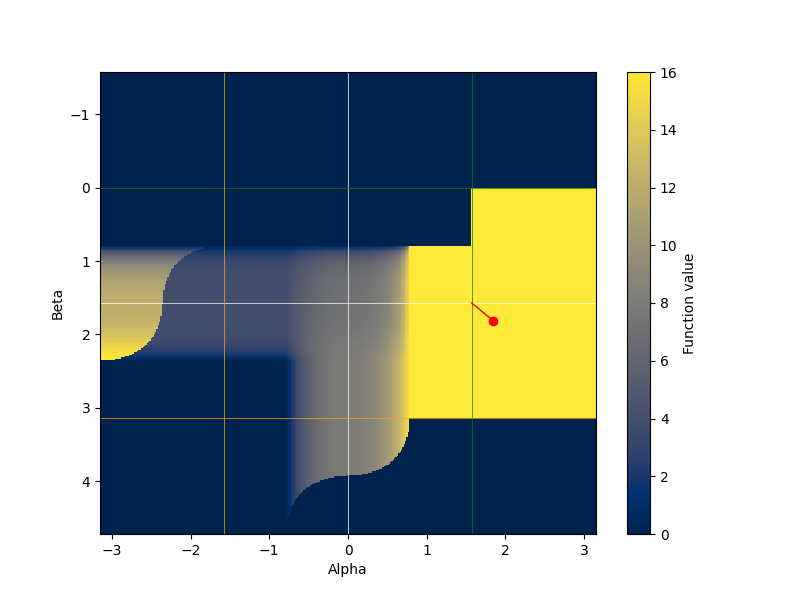}}\\
\caption{Projection of the payoff functions of the two players in a Cournot Duopoly. (\textbf{a}) Projection of the payoff of player $a$. (\textbf{b}) Projection of the payoff of player $b$. \label{fig:Cournot}}
\end{figure} 

\subsubsection{3-Players Games}

\begin{figure}
\subfloat[\centering]{\includegraphics[width=7.0cm]{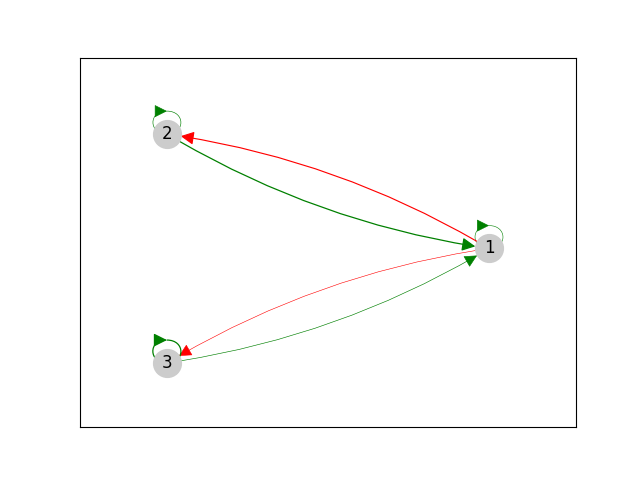}}
\subfloat[\centering]{\includegraphics[width=7.0cm]{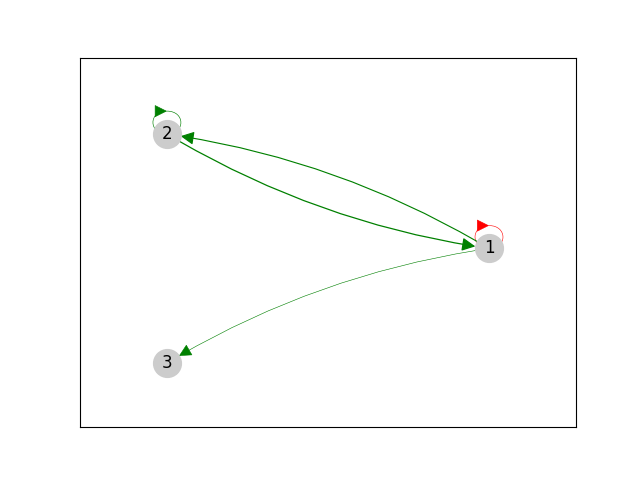}}\\
\subfloat[\centering]{\includegraphics[width=7.0cm]{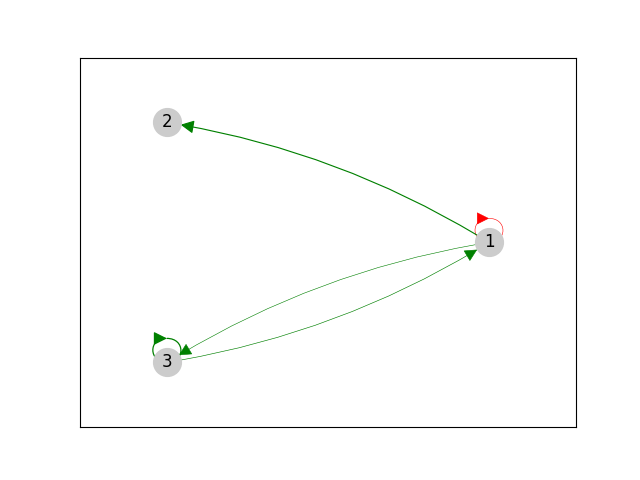}}
\subfloat[\centering]{\includegraphics[width=7.0cm]{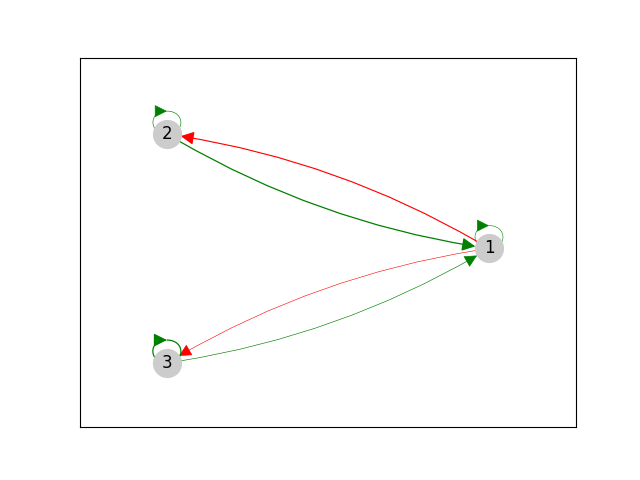}}
\caption{The CoMs of the payoff of the players and the GCoM of a 3-players game. (\textbf{a}) The CoM of player $1$. (\textbf{b}) The CoM of player $2$. (\textbf{c}) The CoM of player $3$. (\textbf{d}) The GCoM. \label{fig:A3}}
\end{figure} 

We cannot represent the entire preference space of games with more than two players as in the two-player case.  
However, we can represent a single point in that space---in particular, any CoM---by means of an adjacency graph.  
The CoM of a player’s payoff conveys information not only about their direct relations with other players, but also about indirect ones.  
For instance, it allows us to address questions such as: what relation between players \(j\) and \(k\) is preferred by player \(i\)?  
Does player \(i\) favor antagonism between them, or rather hierarchy, or mutualism?  

As an illustrative case, we consider the following three-player game:

\[
\begin{array}{c|cc}
 & \text{L} & \text{R} \\
\hline
\text{U} & (3,1,0) & (1,0,0) \\
\text{D} & (2,3,1) & (0,2,1) \\
\end{array}
\hspace{2cm}
\begin{array}{c|cc}
 & \text{L} & \text{R} \\
\hline
\text{U} & (4,1,2) & (2,0,2) \\
\text{D} & (3,3,3) & (1,2,3) \\
\end{array}
\]
\[
\text{(Player 3 chooses A)} \hspace{3cm} \text{(Player 3 chooses B)}
\]

We choose this game because it has a straightforward and clear structure. Player $2$ and $3$ have no influence over each other, so $\rho_{23}=\rho_{32}=0$. There exists a unique equilibrium in dominant strategies, namely \((U,L,B)\). Moreover, and this is important for the preference space, $(U)$ is a dominated strategy for player $2$ and $3$, $(R)$ and $(B)$ are dominated strategy for player $1$ (remember that permutation point are significant).

Figure~\ref{fig:A3} displays the CoM of each player together with the GCoM.  
Remember that the column corresponding to the focal player in each CoM always coincides with the expected maximum point of that player and therefore carries limited information, being invariably positive.  
In the figures, negative values are represented by red arrows, positive values by green arrows, and the width of each arrow is proportional to its absolute weight.  

It is worth noting that the graph of player~2 and ~3 (Figure~\ref{fig:A3}b and Figure~\ref{fig:A3}c) contain an edge between respectively players~1 and~3 and players~1 and~2, i.e. indirect preferences.  
This implies that player $2$ could, in general, increase their payoff if player $1$ favors $3$, and the same for $3$ if $1$ favors $2$. There is an indirect sympathy between player $2$ and $3$, despite they are strategically unrelated, and this happens because player $1$'s strategy $(D)$ is favorable to both.

The self-loops in the CoM graphs represent the auto-orientation of the other two players in directions favorable to the focal one.  
For instance, in Figure~\ref{fig:A3}a both players~2 and~3 exhibit positive self-loops, whereas it is favorable for player~1 to reduce the payoffs of players~2 and~3.  
This configuration clearly illustrates a hierarchical relation with player~1 occupying the dominant position.  
Consistently, the inverse relations (Figures~\ref{fig:A3}b and~\ref{fig:A3}c) display opposite signs.  

Finally, the GCoM highlights the hierarchical structure of the game, with indices \(H=0.61\) and \(R=0\). 
It should be observed that the GCoM and the CoM of player $1$ are practically identical: in other words, the dominium of player $1$ is depicted in the coincidence between their ideal structure and the collective one.


\section{Discussion}

We have proposed a geometric framework for analyzing power relations in games by embedding players' social preferences into a continuous preference space. This construction moves beyond traditional classifications and provides a structural description of relational attitudes that is independent of the specific strategic form. Our results demonstrate that the distribution of every function defined in the preference space reveals meaningful properties, and that these properties can be effectively summarized through the center of mass and our proposed structural indices.

\subsection{Theoretical Contributions}

The preference space framework makes several key contributions to the analysis of power in games. First, it provides a \textit{canonical domain} for representing relational attitudes, eliminating the arbitrariness in selecting utility functions. By constructing a projective space, we obtain a natural geometry that captures the full spectrum of possible relational stances—from pure altruism to pure antagonism—without ad hoc discretization.

Second, our framework \textit{unifies cooperative and non-cooperative approaches} to power analysis. The preference space naturally links to coalitional thinking through Equation~\ref{coalitional}, permutation points (see Section~\ref{sec:permutation}) and bargaining power (see Theorem~\ref{th:voting}), allowing us to recover classical cooperative solution concepts while maintaining the rich strategic structure of non-cooperative games. This bridges what have traditionally been separate literatures, showing how power emerges from the interplay between strategic possibilities and relational attitudes.

Third, the framework reveals \textit{structural similarities across seemingly different games}. The identical projection of Matching Pennies and Rock-Paper-Scissors (Figure 8) demonstrates that games with different strategy sets can share the same underlying power structure. This suggests that our method captures fundamental relational properties that are independent of the specific strategic implementation.

\subsection{Limitations and Boundary Conditions}

Some limitations of our framework warrant discussion. 

The computational complexity of exploring the preference space grows exponentially with the number of players. While the two-player case can be visualized on a torus, higher-dimensional spaces require Monte Carlo sampling or other approximation techniques, as we employed in Section 3.3.6.

The framework relies on the properties of the $\mu$ function (Section 2.1), particularly the Pareto Efficiency axiom used in Theorem~\ref{th:expectedmax} and Corollary~\ref{co:expectedmin}. While this axiom is reasonable for analyzing cooperative potential, alternative specifications of $\mu$—focusing on risk-dominant equilibria or using different selection criteria—might yield different landscapes in the preference space. Future work could explore how sensitive our results are to variations in $\mu$.

\subsection{Future Research Directions}
\label{subsec:future_research}

\subsubsection{Dynamic Games and Evolving Preferences}
A primary extension involves moving from static to \textit{dynamic games}. The current framework analyzes power at a single point in time, but many strategic interactions unfold over multiple periods. Future work could model how the preference space $\mathbb{V}$ itself evolves based on past outcomes, learning, or reputation. For instance, a player's tendency to reciprocate cooperation or punish antagonism could be formalized as a dynamic path within $\mathbb{V}$. The Center of Mass (CoM) could then track the evolution of power relations, revealing whether interactions converge to stable hierarchical or reciprocal structures over time. This would address a key limitation of static axiomatic measures, which cannot capture the endogenous evolution of power through play.

\subsubsection{Incomplete Information and Beliefs}
Our framework currently assumes complete information. A critical next step is to incorporate \textit{incomplete information} about players' relational preferences. This would involve defining a Bayesian version of the preference space, where players have beliefs about others' $v_i$ vectors. Power would then stem not only from actual preferences but also from perceived ones. Questions of signaling, trust, and the manipulation of beliefs could be analyzed geometrically. For example, how does a player's ability to signal altruism affect their bargaining power? This extension would bridge our geometric approach with the literature on Bayesian persuasion and information design.

\subsubsection{Empirical Estimation and Behavioral Applications}
A major advantage of our geometric framework is its potential for \textit{empirical application}. The indices $D$, $H$, and $R$ provide testable hypotheses about the relational structure of observed interactions. Future work could develop methods to estimate a player's position in the preference space from behavioral data, such as experimental game play or field data on negotiations. 

\subsubsection{Institutional Design and Comparative Analysis}
Our framework can be powerfully applied to questions of \textit{institutional design}. By projecting different institutional rules (e.g., varying amendment procedures in legislatures or veto powers in committees) onto the preference space, one can compare how they shape the resulting distribution of power, as summarized by the GCoM and its indices. This would provide a geometric tool for assessing whether a proposed rule change promotes greater reciprocity or reinforces hierarchy. Such an analysis moves beyond the ex-post measurement of power to an ex-ante evaluation of how rules structure relationships, a key concern for political scientists and mechanism designers.

\subsubsection{Algorithmic and Computational Advancements}
As the number of players grows, the dimension of the preference space $\mathbb{V}$ increases, posing computational challenges. Future work could develop more efficient algorithms for sampling high-dimensional preference spaces and computing the associated centers of mass. Techniques from machine learning, such as variational inference, could be employed to approximate the CoM without exhaustive sampling. This would make the framework applicable to large-scale strategic networks, such as those studied in organizational sociology or international relations.

\newpage

\bibliographystyle{apalike}
\bibliography{RPS_power_bibliography}

\end{document}